\documentclass[twocolumn]{autart} 
\usepackage{cite}
\usepackage{amsmath,amssymb,amsfonts}
\usepackage{mathrsfs} 
\usepackage{algorithmic}
\usepackage{graphicx}
\usepackage{textcomp}
\usepackage{booktabs}
\usepackage{optidef} 
\usepackage{tikz}
\usepackage{eucal}
\usetikzlibrary{arrows.meta,positioning,quotes,shapes.multipart}
\usepackage{comment}
\usepackage{optidef}
\usepackage{MnSymbol}
\usepackage{schemabloc}
\usepackage[straightvoltages]{circuitikz}
\usepackage{tikz}
\usetikzlibrary{arrows,shapes,external,babel}
\usepackage{pgfplots}
\pgfplotsset{compat=1.18} 
\usepackage{placeins}
\usepackage{booktabs}
\usepackage[english]{babel}
\usepackage{xspace}
\usepackage[T1]{fontenc}

\newcommand{\Tr}{\ensuremath{\mathrm{Tr}}\xspace}

\newcommand{\hm}[1]{\ensuremath{\mathcal{#1}}\xspace}

\newcommand{\idn}[1][n]{\ensuremath{\mathrm{Id}_#1}\xspace}

\newcommand{\pddphi}{\ensuremath{\frac{\partial}{\partial \varphi}}\xspace}

\newcommand{\ji}{\ensuremath{\textsf{j}}\xspace}

\newcommand{\revision}[2]{%
 \ifstrequal{#1}{Maxime}{%
 {\color{purple}#2}%
 }{%
 \ifstrequal{#1}{Jamal}{%
  {\color{red}#2}%
 }{%
  \ifstrequal{#1}{Pierre}{%
  {\color{blue}#2}%
  }{%
  \ifstrequal{#1}{black}{%
   #2%
  }{%
   \ifstrequal{#1}{commented}{%
   }{%
   {\color{#1}#2}%
   }%
  }%
  }%
 }%
 }%
}

\DeclareMathOperator*{\diag}{diag}

\newtheorem{property}{Property}
\newtheorem{definition}{Definition}
\newtheorem{remark}{Remark}
\newtheorem{corollary}{Corollary}
\newtheorem{proposition}{Proposition}

\begin{document}
\begin{frontmatter}
\title{Harmonic Modeling and Control under Variable-Frequency}
\author[Cran,Safran]{Maxime Grosso}, 
\author[Cran]{Pierre Riedinger}, 
\author[Cran,IUF]{Jamal Daafouz}, 
\author[Lemta]{Serge~Pierfederici}, 
\author[Safran]{Hicham~Janati-Idrissi}, 
\author[Safran]{Blaise~Lap\^{o}tre}
\address[Cran]{Universit\'e de Lorraine, CNRS, CRAN, F-54000 Nancy, France} 
\address[Lemta]{Universit\'e de Lorraine, CNRS, LEMTA, F-54000 Nancy, France} 
\address[Safran]{Safran Electronics \& Defense, F-91344 Massy, France.} 
\address[IUF]{Institut Universitaire de France, F-75000 Paris, France.}
\begin{keyword} 
Harmonic Modeling, variable-frequency, Dynamic Phasor, Parameter-Varying Systems, Toeplitz Operators. 
\end{keyword} 
\begin{abstract}
This paper develops a harmonic-domain framework for systems with variable fundamental frequency. A variable-frequency sliding Fourier decomposition is introduced in the phase domain, together with necessary and sufficient conditions for time-domain realizability.  An exact harmonic-domain differential model is derived for general nonlinear systems under variable frequency, without assumptions on the frequency variation. An explicit parameter-varying approximation is then obtained, along with a tight error bound expressed in terms of local relative frequency variation, providing a non-conservative validity criterion and clarifying the limitations of classical heuristics. A main result shows that, for linear phase-periodic systems with affine frequency dependence, stability analysis and control synthesis can be carried out without approximation and without assumptions on the frequency variation, provided the frequency evolves within a prescribed interval. As a consequence, both problems reduce to harmonic Lyapunov inequalities evaluated at the two extreme frequency values, yielding a convex LMI characterization. The framework is illustrated on a variable-speed permanent magnet synchronous motor.
\end{abstract}
\end{frontmatter}
\section{Introduction}\label{sec:introduction}
\vspace{-.3cm}
Periodic signals can be naturally decomposed into harmonic components, each associated with an integer multiple of a fundamental frequency. This harmonic structure provides a useful lens for analyzing and modeling systems subject to periodic excitations or exhibiting periodic steady-state behavior. Over the past decades, two main classes of approaches have emerged to address such phenomena. The first class, rooted in control theory, focuses on the synthesis of control laws ensuring asymptotic tracking or rejection of periodic signals. 
It includes output regulation schemes based on the internal model principle, repetitive control, and their adaptive or robust extensions~\cite{Muramatsu2018PDOB,Han2023InternalModel,Astolfi2022HarmonicIM,Cecilia2024IMObserver,Chung2012AFC}. 
The second class, historically developed within electrical engineering, adopts a harmonic or frequency-domain viewpoint. 
Seminal contributions include harmonic state-space representations~\cite{sanders_generalized_1991,kwon_frequency-domain_2017} and dynamic phasors~\cite{Mattavelli97, stankovic_modeling_1999, levronModelingPowerNetworks2017, karamiDynamicPhasorbasedAnalysis2018} which have demonstrated their effectiveness in applications such as power electronics, electrical drives, and grid-connected converters~\cite{Cai2024HarmonicSSM,Kong2025DataLight}, where rich harmonic content challenges classical time-domain modeling. 
By representing periodic behaviors through the evolution of harmonic components, these approaches recast time-periodic dynamics into time-invariant (LTI) models in the harmonic domain, enabling the systematic use of standard LTI tools. 
 
Most existing harmonic modeling and control frameworks rely on the assumption of a constant fundamental frequency~\cite{Kwon17, ormrodHarmonicStateSpace2013, caiResearchHarmonicStateSpace2023, lyuHarmonicStateSpaceBased2019, Blin2022NecessaryTime}. This assumption is violated in many emerging applications where the fundamental frequency varies with operating conditions or is itself a control variable. Representative examples include variable-speed motor drives, aircraft electrical systems, renewable energy generation, and autonomous power systems. In such settings, frequency variations directly affect harmonic interactions, rendering fixed-frequency harmonic models inaccurate and potentially destabilizing when used for control design.
Early attempts to address variable-frequency effects introduced dynamic phasors with time-varying frequency~\cite{sanders_generalized_1991, yang_dynamic_2016}. These approaches typically relied on heuristic slowly-varying assumptions, introduced nonlinear or delayed terms, and lacked rigorous guarantees regarding time-domain realizability and control-oriented model validity. More recent contributions have clarified fundamental aspects of harmonic modeling, including the coincidence condition for time-domain realizability~\cite{Blin2022NecessaryTime} and computational tools for harmonic Lyapunov and LMI-based analysis~\cite{vernerey_tblmi_2025}. However, these advances remain largely restricted to the constant-frequency case, and a rigorous framework for harmonic modeling and control under time-varying frequency is still lacking.

To address this gap, we start from a sliding Fourier analysis formulated in the phase domain and introduce a variable-frequency harmonic decomposition, together with necessary and sufficient conditions for time-domain realizability. Building on this representation, we derive an exact harmonic-domain differential model that captures the true dynamics of harmonic components under frequency variation, without relying on slowly-varying frequency assumptions. This exact formulation serves as the foundation for a systematic parameter-varying approximation, together with a precise and computable error bound that quantifies its accuracy in terms of local relative frequency variation. Within the proposed approach, we further show that for a broad and practically relevant class of linear phase-periodic systems whose harmonic representation depends affinely on the instantaneous frequency, stability and state-feedback synthesis can be addressed exactly in the harmonic domain. In particular, we establish a Lyapunov stability condition that can be certified by evaluating harmonic Lyapunov inequalities at a finite number of frequency vertices, without imposing assumptions on the frequency variation. This result bridges harmonic modeling with LMI-based control design 
enabling stability guarantees over a prescribed frequency interval.
A preliminary version of this work appeared in~\cite{grossoFrequencyvaryingHarmonicDomain2025}. Here, we significantly extend that study by removing restrictive assumptions, deriving the harmonic formulation from first principles, and establishing precise conditions under which an exact and well-posed parameter-varying control framework can be obtained.

The remainder of the paper is organized as follows. Section~\ref{sec:preliminaries} recalls the foundations of harmonic modeling under the classical constant-frequency assumption. Section~\ref{sec:fourier_decomposition} introduces the proposed variable-frequency sliding Fourier decomposition and establishes its fundamental properties. Section~\ref{sec:exact_model} derives the exact harmonic-domain dynamical model governing the evolution of variable-frequency harmonic components. Section~\ref{sec:lpv_approximation} develops a parameter-varying approximation together with a quantitative and non-conservative validity criterion. Section~\ref{sec:robust_synthesis_LPP} establishes exact Lyapunov stability and control synthesis results for phase–periodic systems with affine frequency dependence. Finally, Section~\ref{sec:pmsm_application} illustrates the proposed framework on a variable-speed permanent magnet synchronous motor.

\vspace{-.3cm}
{\bf Notations: } The transpose of a matrix $A$ is denoted $A^\top$ and $A^*$ denotes the complex conjugate transpose $A^*=\bar A^\top$. The $n$-dimensional identity matrix is denoted $\idn$. The infinite identity matrix is denoted $\mathcal{I}$. $C^a$ denotes the space of absolutely continuous functions,
$L^{p}([a,b],\mathbb{C}^n)$ (resp. $\ell^p(\mathbb{C}^n)$) denotes the Lebesgue spaces of $p$-integrable functions on $[a, b]$ with values in $\mathbb{C}^n$ (resp. $p$-summable sequences of $\mathbb{C}^n$) for $1\leq p\leq\infty$. $L_{loc}^{p}$ is the set of locally $p$-integrable functions. Finally, $\otimes$ denotes the Kronecker product.
\section{Fixed-Frequency Harmonic Modeling}\label{sec:preliminaries}
Before introducing the proposed framework for variable-frequency systems, we briefly recall the fundamental principles of harmonic modeling under the classical assumption of a constant fundamental frequency~\cite{Blin2022NecessaryTime}. 
\begin{definition}\label{sfd_fixed}
For a time-signal $x \in L^2_{\text{loc}}(\mathbb{R}, \mathbb{C})$ and a fixed period $T_0 = 2\pi/\omega_0$, its sliding Fourier decomposition (SFD) is the mapping $\hm{F}_{T_0}: x \mapsto X\in C^a(\mathbb{R},\ell^2(\mathbb{C}^n))$ where for every $t$, the components of the sequence $X(t) = {(X_k(t))}_{k \in \mathbb{Z}}$, are given by:
\begin{equation}\label{eq:fourier_fixed_freq}
 X_k(t) = \frac{1}{T_0}\int_{t-T_0}^t x(\tau)e^{-\ji k \omega_0 \tau}d\tau.
\end{equation}
$X_k$ are called the $k$-th dynamic phasors.
For a time vector-valued function $x\in L^2_{loc}(\mathbb{R},\mathbb{C}^n)$ (or matrix-valued), this extends to
\begin{equation}
 X=\hm F(x)=(\hm F(x_1),\ldots,\hm F(x_n)).\label{fourier}
\end{equation}
\end{definition}
\vspace{-.3cm}
Conversely, the arrival space, denoted by $\mathcal H_{T_0}$ and commonly referred to as the harmonic domain~\cite{Blin2022NecessaryTime}, is characterized as follows.
\begin{thm}\label{thm:coincidence_fixed}
For a given $X \in C^a(\mathbb{R}, \ell^2(\mathbb{C}^n))$, there exists $x \in L^2_{\text{loc}}(\mathbb{R}, \mathbb{C}^n)$ such that $ X=\hm F_{T_0}(x)$ if and only if 
\begin{equation}\label{eq:coincidence_fixed_cond}
 \dot{X}_k(t) = \dot{X}_0(t) e^{-\ji k \omega_0 t}, \quad \text{a.e.\ for all } k \in \mathbb{Z}.
\end{equation}
This condition defines a subspace of $C^a(\mathbb{R}, \ell^2(\mathbb{C}^n))$ called the harmonic domain $\hm{H}_{T_0}$. The inverse mapping $\hm{F}_{T_0}^{-1}: \hm{H}_{T_0} \to L^2_{\text{loc}}$ is given for a continuous signal $x$ by:
\begin{equation}\label{recons}
 x(t) = \sum_{k \in \mathbb{Z}} X_k(t) e^{\ji k \omega_0 t} + \frac{{T_0}}{2}\dot{X}_0(t).
\end{equation}
\end{thm}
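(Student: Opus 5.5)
Necessity comes first. If $X=\hm F_{T_0}(x)$ with $x\in L^2_{loc}$, then each $X_k$ is the integral of a locally integrable function over a sliding window. By the Lebesgue differentiation theorem, $X_k$ is absolutely continuous and, for almost every $t$,
\begin{equation*}
\dot X_k(t)=\frac{1}{T_0}\bigl(x(t)e^{-\ji k\omega_0 t}-x(t-T_0)e^{-\ji k\omega_0 (t-T_0)}\bigr)=\frac{1}{T_0}\bigl(x(t)-x(t-T_0)\bigr)e^{-\ji k\omega_0 t},
\end{equation*}
using $e^{\ji k\omega_0 T_0}=1$. Taking $k=0$ gives $\dot X_0(t)=\frac{1}{T_0}(x(t)-x(t-T_0))$, and substituting this back yields \eqref{eq:coincidence_fixed_cond}. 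The membership $X(t)\in\ell^2$ for each $t$ is Parseval applied to the restriction of $x$ to the window $[t-T_0,t]$.

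For sufficiency, let $X\in C^a(\mathbb{R},\ell^2(\mathbb{C}^n))$ satisfy \eqref{eq:coincidence_fixed_cond}. I will build a candidate by pointwise-in-window Fourier synthesis. For fixed $t$, Riesz--Fischer gives a unique $y_t\in L^2([t-T_0,t])$ with Fourier coefficients $X_k(t)$ in the basis $e^{\ji k\omega_0\tau}$. I must show that these local functions glue into one $x\in L^2_{loc}$ with $x|_{[t-T_0,t]}=y_t$ for every $t$.

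To do this I compare two windows $t<s$ with $s-t<T_0$. From \eqref{eq:coincidence_fixed_cond},
\begin{equation*}
X_k(s)-X_k(t)=\int_t^s \dot X_0(\tau)e^{-\ji k\omega_0\tau}d\tau.
\end{equation*}
This is exactly the $k$-th Fourier coefficient of the function $T_0\dot X_0\,\mathbf 1_{[t,s]}$, after the interval $[t,s]$ is identified modulo $T_0$. The target relation is therefore
\begin{equation*}
y_s=y_t\ \text{on } [s-T_0,t],\qquad y_s(\tau)=y_t(\tau-T_0)+T_0\dot X_0(\tau)\ \text{on } [t,s].
\end{equation*}
This holds if I identify both windows with a common period circle and use uniqueness of Fourier coefficients in $L^2$ of the circle. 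The required mod-$T_0$ shift identity $y_s(\tau)=y_t(\tau-T_0)$ is exactly the definition of the window on the circle. Consistency follows, and I define $x$ by these restrictions. The main obstacle is this gluing step: making the identification modulo $T_0$ precise, and checking that $\dot X_0\in L^2_{loc}$ so that $y_s\in L^2$. The latter follows because $\dot X_0(\tau)=\frac{1}{T_0}(x(\tau)-x(\tau-T_0))$ a posteriori. A priori I would obtain it from the difference $y_s-y_t$ shifted, which lies in $L^2$.

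The reconstruction formula \eqref{recons} follows once $x$ is continuous. The Fourier series of $y_t$ on $[t-T_0,t]$ converges in the Cesàro sense to the average of the one-sided limits at the junction point $\tau=t$, where the periodic extension jumps from $x(t)$ to $x(t-T_0)$. This gives
\begin{equation*}
\sum_k X_k(t)e^{\ji k\omega_0 t}=\tfrac12\bigl(x(t)+x(t-T_0)\bigr)=x(t)-\tfrac{T_0}{2}\dot X_0(t),
\end{equation*}
using the necessity identity for $\dot X_0$. Hence
\begin{equation*}
x(t)=\sum_k X_k(t)e^{\ji k\omega_0 t}+\tfrac{T_0}{2}\dot X_0(t),
\end{equation*}
with the series understood in the symmetric summation sense.
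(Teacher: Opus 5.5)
The paper gives no proof of this statement: it is recalled from \cite{Blin2022NecessaryTime}, and it serves only as the base case to which the variable-frequency coincidence theorem is reduced by the phase change of variables. Taken on its own, your realizability argument is correct. Necessity is fine. The core of sufficiency is also right: the $T_0$-periodizations of $y_s-y_t$ and of $T_0\dot X_0\mathbf{1}_{[t,s]}$ have the same coefficients $X_k(s)-X_k(t)$, hence they coincide.

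Two simplifications are available.
\begin{itemize}
\item Uniqueness of Fourier coefficients already holds in $L^1$ of the circle. So $\dot X_0\in L^1_{loc}$, which is automatic from the absolute continuity of $X_0$, is enough for the identification. Then $\dot X_0\in L^2_{loc}$ follows as a consequence rather than being an obstacle.
\item Instead of gluing windows, you can take $x=y_{t_0}$ on a single window and propagate it in both directions by $x(\tau)=x(\tau-T_0)+T_0\dot X_0(\tau)$. Then $\mathcal{F}_{T_0}(x)_k-X_k$ has zero derivative a.e.\ and vanishes at $t_0$. If you keep the gluing, do it along a countable family such as $t\in\frac{T_0}{2}\mathbb{Z}$, and then use overlap consistency for arbitrary $t$.
\end{itemize}

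You were also right to use only componentwise absolute continuity of the $X_k$. Strong absolute continuity into $\ell^2$ would give $\dot X(t)\in\ell^2$ a.e., and together with $|\dot X_k|=|\dot X_0|$ this forces $\dot X_0=0$. Indeed, for $0\le s-t\le T_0$,
$\|\mathcal{F}_{T_0}(x)(s)-\mathcal{F}_{T_0}(x)(t)\|_{\ell^2}^2=\frac{1}{T_0}\int_t^s|x(\tau)-x(\tau-T_0)|^2\,d\tau$,
so SFDs of non-periodic signals are only $\frac12$-H\"older curves in $\ell^2$.

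The step that does not hold as written is the last one. Fej\'er's theorem gives convergence of the Ces\`aro means of the window Fourier series at the junction point. Ces\`aro summability does not imply convergence of the symmetric partial sums $\sum_{|k|\le N}X_k(t)e^{\mathrm{j}k\omega_0 t}$, and for merely continuous $x$ these can diverge. For example, take a continuous $T_0$-periodic $x$ whose Fourier series diverges at some $t^\star$ (du~Bois-Reymond). Then $X$ is constant, $\dot X_0\equiv0$, and the right-hand side of the reconstruction formula at $t^\star$ is exactly that divergent series. You have two ways to fix this. Either state the formula with Fej\'er (Ces\`aro) summation, which is precisely what your argument proves. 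Or add a regularity hypothesis such as bounded variation of $x$ on each window; the Dirichlet--Jordan theorem then gives convergence of the symmetric partial sums to $\tfrac12\bigl(x(t)+x(t-T_0)\bigr)$.
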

This theorem is commonly referred to as the \emph{coincidence condition}.
It characterizes the compatibility between the time evolution of the harmonic coefficients and an underlying time-domain signal, and guarantees that the harmonic representation admits a unique time-domain realization. We now recall the following result~\cite{Blin2022NecessaryTime}.
\begin{thm}\label{equiv_mod}
Consider a general dynamical system
\begin{equation}
 \dot{x}(t) = f(t, x(t))\label{st}
\end{equation} 
If $x$ is a Carathéodory solution of~\eqref{st} then, for any $T_0>0$, $X=\hm F_{T_0}(x)\in \hm{H}_{T_0}$ and is a solution of:
\begin{equation}\label{eq:harmodel_fixed_freq}
 \dot{X}(t) = \hm{F}_{T_0}(f(t, x(t))) - \omega_0\hm{N} X(t),
\end{equation}
where $\hm{N} = \idn \otimes \diag(\ji \mathbb{Z})$. 
Conversely, if $X$ is a solution of~\eqref{eq:harmodel_fixed_freq} that belongs to $\hm{H}_{T_0}$, its representative $x(t)$ is a solution of~\eqref{st}.
\end{thm}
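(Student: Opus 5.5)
The plan is to differentiate the sliding Fourier coefficients directly. By Definition~\ref{sfd_fixed}, for $x\in L^2_{loc}$ each $X_k$ is an integral with a variable upper and lower limit, hence absolutely continuous. By the Lebesgue differentiation theorem, almost everywhere
\begin{equation*}
\dot X_k(t)=\frac{1}{T_0}\Big(x(t)e^{-\ji k\omega_0 t}-x(t-T_0)e^{-\ji k\omega_0 (t-T_0)}\Big)=\frac{1}{T_0}\big(x(t)-x(t-T_0)\big)e^{-\ji k\omega_0 t},
\end{equation*}
where we use $e^{\ji k\omega_0T_0}=1$. Taking $k=0$ gives $\dot X_k=\dot X_0e^{-\ji k\omega_0 t}$, which is the coincidence condition of Theorem~\ref{thm:coincidence_fixed}. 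Together with $X\in C^a(\mathbb R,\ell^2)$ (Parseval on each window), this shows $X\in\hm H_{T_0}$; this membership holds for every $x\in L^2_{loc}$ and does not use the dynamics.

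\textbf{Direct implication.} We use that a Carathéodory solution $x$ is absolutely continuous with $\dot x=f(t,x)$ a.e., and $\dot x\in L^1_{loc}$. We assume $f(\cdot,x(\cdot))\in L^2_{loc}$ so that $\hm F_{T_0}$ applies, and otherwise interpret the coefficients componentwise. Since $x$ is absolutely continuous, we can integrate by parts on $[t-T_0,t]$:
\begin{equation*}
\frac{1}{T_0}\int_{t-T_0}^t \dot x(\tau)e^{-\ji k\omega_0\tau}d\tau=\frac{1}{T_0}\big[x(\tau)e^{-\ji k\omega_0\tau}\big]_{t-T_0}^{t}+\ji k\omega_0X_k(t).
\end{equation*}
The boundary term equals $\dot X_k(t)$ by the computation above. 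Therefore $\hm F_{T_0}(\dot x)_k=\dot X_k+\ji k\omega_0X_k$, that is, $\dot X=\hm F_{T_0}(f(t,x))-\omega_0\hm N X$ componentwise, with the Kronecker structure $\idn\otimes\diag(\ji\mathbb Z)$ for vector-valued $x$. This is \eqref{eq:harmodel_fixed_freq}.

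\textbf{Converse.} Let $X\in\hm H_{T_0}$ solve \eqref{eq:harmodel_fixed_freq}. By Theorem~\ref{thm:coincidence_fixed} there is a unique representative $x\in L^2_{loc}$ with $X=\hm F_{T_0}(x)$. The main obstacle is that we do not yet know that $x$ is absolutely continuous, so we cannot integrate by parts.

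To get around this, set $g(t)=f(t,x(t))$ and define $y(t)=x(t_0)+\int_{t_0}^t g$, which is absolutely continuous. We then compare $Y=\hm F_{T_0}(y)$ with $X$. Applying the direct computation to $y$ gives $\dot Y=\hm F_{T_0}(g)-\omega_0\hm N Y$, and by hypothesis $X$ satisfies the same linear equation with the same forcing term. Hence $E=X-Y$ satisfies $\dot E=-\omega_0\hm N E$, so $E_k(t)=e^{-\ji k\omega_0 (t-t_0)}E_k(t_0)$.

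Moreover, $E=\hm F_{T_0}(x-y)\in\hm H_{T_0}$, so $\dot E_k=\dot E_0e^{-\ji k\omega_0 t}$. For $k=0$ we get $\dot E_0=0$, so all $\dot E_k=0$. Combined with the explicit solution, this forces $E_k(t_0)=0$ for all $k\neq0$, while $E_0$ is constant. Then $x-y$ has vanishing sliding Fourier coefficients except for a constant mean, and by uniqueness of the representative $x-y$ is a constant $c$ (a.e.). Therefore $x=y+c$ is absolutely continuous with $\dot x=g=f(t,x)$ a.e., so $x$ is a Carathéodory solution.

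The delicate points are, first, justifying that uniqueness of the inverse map gives $x-y$ constant, which uses the reconstruction formula \eqref{recons} with $\dot E_0=0$, and second, the integrability assumptions on $f$ along $x$.
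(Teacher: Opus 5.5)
Your proof is correct. The paper gives no proof of this statement; it recalls it from \cite{Blin2022NecessaryTime}, so there is no in-paper argument to compare against. Your direct part is the natural route: Lebesgue differentiation of the sliding integral, then integration by parts, which gives $\hm F_{T_0}(\dot x)=\dot X+\omega_0\hm N X$. Your converse is a clean self-contained device. You freeze $g=f(\cdot,x(\cdot))$ and compare $X$ with the SFD of the absolutely continuous primitive $y$. This reduces everything to the homogeneous equation $\dot E=-\omega_0\hm N E$ inside $\hm H_{T_0}$, and so avoids needing any a priori regularity result for the representative $x$.

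A few points to tighten:
\begin{itemize}
\item The representative $x$ is only an $L^2_{loc}$ class, so $x(t_0)$ has no pointwise meaning. Use any constant in the definition of $y$; your argument shows $x-y$ is constant anyway.
\item The step you call delicate does not need the reconstruction formula \eqref{recons}. The paper states that formula only for continuous signals, and $x-y$ is not yet known to be continuous. Instead, your derivative formula shows that $\dot E_0=0$ a.e.\ means $(x-y)(t)=(x-y)(t-T_0)$ a.e., so $x-y$ is $T_0$-periodic. Its Fourier coefficients on any period are then the $E_k$, which vanish for $k\neq 0$. Completeness of $\{e^{\ji k\omega_0 t}\}_{k}$ in $L^2(0,T_0)$ gives $x-y=E_0$ a.e. (equivalently, $\hm F_{T_0}$ is injective).
\item After replacing $x$ by its absolutely continuous version $\tilde x=y+c$, note that $f(t,\tilde x(t))=g(t)$ a.e., because $\tilde x=x$ a.e. Hence $\tilde x$ is a Carath\'eodory solution.
\item Your parenthetical ``Parseval on each window'' gives $X(t)\in\ell^2$ and $\ell^2$-continuity, but not absolute continuity in the $\ell^2$ norm. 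That stronger property actually fails for non-periodic $x$. The a.e.\ derivative you computed has every entry of modulus $|x(t)-x(t-T_0)|/T_0$, so it is not in $\ell^2$. Concretely, for $x=\mathbf{1}_{[0,\infty)}$ and $0<s<t<T_0$, one gets $\|X(t)-X(s)\|_{\ell^2}=\sqrt{(t-s)/T_0}$. The membership $X\in C^a(\mathbb{R},\ell^2)$ in Definition~\ref{sfd_fixed} must therefore be read componentwise, which is exactly what your computation establishes and what your converse uses.
\end{itemize}
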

This theorem is particularly useful for $T_0$-periodic systems (i.e., $f(t,x)=f(t+T_0,x)$), since in this case the harmonic representation~\eqref{eq:harmodel_fixed_freq} reduces to a time-invariant system. A key advantage is that $\hm F_{T_0}(f(t,x(t)))$ can then be computed explicitly when $f$ is polynomial in $x$ with periodic coefficients, using the following property~\cite{Blin2022NecessaryTime}.
\begin{property}\label{prod}
Let $x \in L^2_{\mathrm{loc}}$ and let $A$ and $B$ be $L^\infty$-matrix valued functions. Then, the following properties hold:
\begin{subequations}\label{eq:toeplitz_products}
\begin{align}
 \hm{F}_{T_0}(Ax) &= \hm{T}_{T_0}(A)\,\hm{F}_{T_0}(x) = \hm{A} X,\\
 \hm{T}_{T_0}(AB) &= \hm{T}_{T_0}(A)\,\hm{T}_{T_0}(B) = \hm{A}\,\hm{B},
\end{align}
\end{subequations}
where $\hm{T}_{T_0}$ is the Toeplitz transformation. For a scalar function $a \in L^\infty$ with Fourier coefficients ${(a_k)}_{k \in \mathbb{Z}}$, the corresponding infinite-dimensional Toeplitz operator, bounded on $\ell^2$, is defined as:
\begin{equation}\label{eq:toeplitz_matrix_def}
 \hm{T}_{T_0}(a)=
 \begin{bmatrix}
 \ddots & \vdots & \vdots & \vdots & \\
 \cdots & a_0 & a_{-1} & a_{-2} & \cdots \\
 \cdots & a_1 & a_0 & a_{-1} & \cdots \\
 \cdots & a_2 & a_1 & a_0 & \cdots \\
 & \vdots & \vdots & \vdots & \ddots
 \end{bmatrix}.
\end{equation}
For a matrix-valued function $A = {(a_{ij})}_{1 \le i \le n, 1 \le j \le m}$, the Toeplitz-block operator is
$ \hm{A} = {\big(\hm{T}_{T_0}(a_{ij})\big)}_{1 \le i \le n, 1 \le j \le m}$.
Moreover, the $\ell^2$-norm of $\hm{A}$ satisfies $
 \|\hm{A}\|_{\ell^2} = \|A\|_{L^\infty}$.
\end{property}
For illustration, consider an LTP system of the form
\begin{equation}\label{eq:basicLTP}
 \dot{x} = A(t)x + B(t)u
\end{equation} 
where $A(t)$ and $B(t)$ are ${T_0}$-periodic, $L^\infty$-matrix valued functions. The harmonic representation~\eqref{eq:harmodel_fixed_freq} reduces to an infinite-dimensional LTI system:
\begin{equation}\label{eq:ltp_to_lti}
 \dot{X}(t) = (\hm{A} - \omega_0 \hm{N})X(t) + \hm{B}U(t).
\end{equation}
This LTP-to-LTI transformation enables the application of classical LTI analysis and synthesis tools to periodic systems~\cite{grosso_control_2024,grosso_harmonic_2025}. The challenge is to extend this powerful and rigorous framework to systems in which the fundamental frequency, $\omega_0$, is no longer constant.
\section{The variable-frequency SFD}\label{sec:fourier_decomposition}
We now relax the constant-frequency assumption and consider a time-varying fundamental frequency, denoted $\omega(t)$. The central idea is to replace the fixed period $T_0$ with a time-varying ``pseudo-period'' $T(t)$, following the approach originally proposed in~\cite{sanders_generalized_1991,yang_dynamic_2016}.
In the following, we distinguish between time-domain variables $t$, phase-domain variables $\phi$, and the phase function $\theta(t)$ that maps time to phase.
\begin{definition}\label{def:phase_function}
Let $\omega \in C^0(\mathbb{R}, \mathbb{R}^+)$ be a strictly positive, continuous function. The associated \emph{phase function} $\theta$ is a $C^1$ diffeomorphism defined by
\begin{equation}\label{theta}
 \theta(t) \triangleq \theta(0) + \int_0^t \omega(s)\, ds.
\end{equation}
Its inverse, called the \emph{phase-to-time mapping}, is denoted by $p(\phi) = \theta^{-1}(\phi)$ and satisfies
$ \frac{d}{d\phi} p(\phi) = \frac{1}{\omega(p(\phi))}. $
\end{definition}
\vspace{-.3cm}
This diffeomorphism provides a natural way to express any time-domain signal in terms of its phase.
\begin{definition}\label{def:phase_domain_signal}
Let $x \in L^2_{\mathrm{loc}}(\mathbb{R}, \mathbb{C}^n)$. Its \emph{phase-domain representation} is the mapping
\[
\phi \mapsto x_\theta(\phi) \triangleq x(p(\phi)),
\]
so that for $\phi = \theta(t)$, we have: $x(t) = x_\theta(\theta(t))$.
\end{definition}
Importantly, this transformation is not a mere time reparameterization: it enables the definition of a fixed-length sliding Fourier window in the phase domain, which is essential for extending harmonic analysis to variable-frequency signals. The definition of the sliding Fourier decomposition for variable-frequency signals is recalled below~\cite[Appendix A]{sanders_generalized_1991}.
\begin{definition}
Let $\theta$ be a phase function as defined in~\eqref{theta}. The SFD of a time-domain signal $x$ is denoted by $\hm{F}_\theta$ and defined by $t \mapsto X(t):= X_\theta(\theta(t)), $
where $
X_\theta(\varphi) = \hm{F}_{2\pi}(x_\theta)(\varphi)$ and $\hm{F}_{2\pi}$ is the standard SFD, introduced in Definition~\ref{sfd_fixed}, over a window of length $2\pi$. The resulting $X_\theta(\varphi)$ is called the \emph{phase-domain phasor}.
\end{definition}
From this definition, we have the following property.
\begin{property}\label{prop:defkPhasorfvar}
The $k$-th phasor of $x$ is given by
\begin{align}
 X_k(t) &= \frac{1}{2\pi} \int_{\theta(t)-2\pi}^{\theta(t)} x_\theta(\phi) \, e^{-\ji k\phi} \, d\phi \label{eq:dp_def_phase} \\
  &= \frac{1}{2\pi} \int_{t-T(t)}^{t} x(\tau) \, e^{-\ji k\theta(\tau)} \, \omega(\tau) \, d\tau, \label{eq:dp_def_time}
\end{align}
where the time-varying \emph{pseudo-period} $T(t)$ is defined as the unique positive solution to
\begin{equation}\label{eq:pseudo_period_def}
 \theta(t) - \theta(t - T(t)) = 2\pi, \quad \forall t \in \mathbb{R}.
\end{equation}
Moreover, if $\omega$ is continuous ($\omega \in C^0$), then $T(t)$ is a $C^1$ function satisfying
\begin{equation}\label{eq:pseudo_period_derivative}
 \dot{T}(t) = 1 - \frac{\omega(t)}{\omega(t - T(t))}.
\end{equation}
\end{property}
\vspace{-.5cm}
\begin{pf}
Applying Definition~\ref{sfd_fixed} to $\hm{F}_{2\pi}(x_\theta)(\varphi)$ yields
\[
X_k(\varphi) = \frac{1}{2\pi} \int_{\varphi-2\pi}^{\varphi} x_\theta(\phi) \, e^{-\ji k \phi} \, d\phi,
\]
noting that in~\eqref{eq:fourier_fixed_freq} we have $\omega_0 = 1$ when $T_0 = 2\pi$. 
Equation~\eqref{eq:dp_def_phase} follows directly by substituting $\varphi = \theta(t)$. 
Equations~\eqref{eq:dp_def_time} and~\eqref{eq:pseudo_period_def} are obtained by performing the change of variables $\phi = \theta(\tau)$. 
Finally,  differentiating~\eqref{eq:pseudo_period_def} with respect to $t$ concludes the proof. \hfill $\square$
\end{pf}
\begin{remark}
When $\omega$ is constant,~\eqref{eq:dp_def_time} reduces to the classical fixed-frequency Fourier decomposition~\eqref{eq:fourier_fixed_freq}. If $\omega(t)$ is \emph{phase-periodic}, the right-hand side of~\eqref{eq:pseudo_period_derivative} vanishes, implying that the pseudo-period $T(t)$ is constant; denote this constant by $T_c$. Differentiating~\eqref{eq:pseudo_period_def} then shows that $\omega(t)$ is also $T_c$-periodic. Since $T_c$ generally differs from $2\pi/\omega(t)$, the constant phasor associated with $\omega(t)$ through~\eqref{eq:dp_def_time} should not be confused with the constant phasor obtained from the classical Fourier decomposition~\eqref{eq:fourier_fixed_freq} with $T_0=T_c$. These quantities are defined with respect to different referentials, namely the phase and the time domains, and therefore represent fundamentally distinct objects.
\end{remark}
The following theorem generalizes the classical coincidence condition of Theorem~\ref{thm:coincidence_fixed} to the variable-frequency setting and establishes a necessary and sufficient condition for time-domain realizability.
\begin{thm}\label{thm:coincidence}
Let $X \in C^a(\mathbb{R}, \ell^2(\mathbb{C}^n))$. There exists a signal $x \in L^2_{\mathrm{loc}}(\mathbb{R}, \mathbb{C}^n)$ such that $X = \hm{F}_\theta(x)$ if and only if
\begin{equation}\label{eq:coincidence_varying}
 \dot{X}_k(t) = \dot{X}_0(t) \, e^{-\ji k \theta(t)}, \quad \text{for a.e. } t \text{ and all } k \in \mathbb{Z}.
\end{equation}
The subspace of $C^a(\mathbb{R}, \ell^2(\mathbb{C}^n))$ satisfying this condition is called the \emph{variable-frequency harmonic domain} and denoted $\hm{H}_\theta$. The inverse mapping $\hm{F}_\theta^{-1}: \hm{H}_\theta \to L^2_{\mathrm{loc}}$ is given, for a continuous signal $x$, by
\begin{equation}\label{recons_timedomain}
 x(t) = \sum_{k \in \mathbb{Z}} X_k(t) \, e^{\ji k \theta(t)} + \frac{\pi}{\omega(t)} \, \dot{X}_0(t).
\end{equation}
\end{thm}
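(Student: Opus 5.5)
The plan is to reduce the statement to the fixed-frequency coincidence condition, Theorem~\ref{thm:coincidence_fixed}, with $T_0=2\pi$ ($\omega_0=1$), applied in the phase domain. The transport between time and phase goes through the $C^1$ diffeomorphism $\theta$ of Definition~\ref{def:phase_function}. By definition $X(t)=X_\theta(\theta(t))$ with $X_\theta=\hm F_{2\pi}(x_\theta)$. Hence $X=\hm F_\theta(x)$ for some $x\in L^2_{loc}$ if and only if $X_\theta:=X\circ p$ lies in $\hm H_{2\pi}$. Indeed, $x\in L^2_{loc}$ if and only if $x_\theta=x\circ p\in L^2_{loc}$, since $p$ is a $C^1$ diffeomorphism with locally bounded derivative $1/\omega(p(\phi))$ and locally bounded-below $\omega$.

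\textbf{Step 1 (regularity).} I first check that composition with $p$ and $\theta$ preserves absolute continuity in $C^a(\mathbb R,\ell^2)$. Both maps are $C^1$ and strictly monotone, so they are locally bi-Lipschitz, and the chain rule holds a.e. This gives
\[
\frac{d}{d\varphi}X_\theta(\varphi)=\frac{\dot X(p(\varphi))}{\omega(p(\varphi))}.
\]
Null sets are mapped to null sets in both directions, so ``a.e.\ in $t$'' and ``a.e.\ in $\varphi$'' are interchangeable.

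\textbf{Step 2 (transfer of the condition).} Theorem~\ref{thm:coincidence_fixed} with $\omega_0=1$ says $X_\theta\in\hm H_{2\pi}$ if and only if $X_{\theta,k}'(\varphi)=X_{\theta,0}'(\varphi)e^{-\ji k\varphi}$ a.e. Substituting $\varphi=\theta(t)$ and using Step 1, the common factor $1/\omega(t)>0$ cancels, and the condition becomes exactly \eqref{eq:coincidence_varying}. Both directions follow, since every step is an equivalence.

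\textbf{Step 3 (reconstruction formula).} For continuous $x$, $x_\theta$ is continuous and \eqref{recons} with $T_0=2\pi$ gives
\[
x_\theta(\varphi)=\sum_k X_{\theta,k}(\varphi)e^{\ji k\varphi}+\pi X_{\theta,0}'(\varphi).
\]
Evaluating at $\varphi=\theta(t)$ and replacing $X_{\theta,0}'(\theta(t))$ by $\dot X_0(t)/\omega(t)$ yields \eqref{recons_timedomain}. Uniqueness of the representative, and hence well-definedness of $\hm F_\theta^{-1}$, is inherited from the fixed-frequency case because $x\mapsto x\circ p$ is a bijection on $L^2_{loc}$.

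The main obstacle is Step 1: justifying the chain rule for $\ell^2$-valued absolutely continuous functions composed with $\theta$, which is only $C^1$ since $\omega$ is merely continuous, and confirming that null sets and local $L^2$ integrability are preserved. I would handle this componentwise, using the local Lipschitz bounds on $\theta$ and $p$ and the Bochner-integral representation $X(t)=X(0)+\int_0^t\dot X$ together with the change of variables $s=p(\phi)$. Everything else is a direct substitution into Theorem~\ref{thm:coincidence_fixed}.
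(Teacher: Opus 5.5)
Your proposal is correct and follows the same route as the paper. Both arguments transport $X$ to the phase domain via $p$, invoke Theorem~\ref{thm:coincidence_fixed} with $T_0=2\pi$, transfer the coincidence condition through the chain rule $\dot X_k(t)=\omega(t)\,\frac{d}{d\varphi}X_{\theta,k}(\varphi)\big|_{\varphi=\theta(t)}$ (cancelling the positive factor $\omega(t)$), and evaluate \eqref{recons} at $\varphi=\theta(t)$ for the reconstruction. Your Step~1 (local bi-Lipschitz property of $\theta$ and $p$, preservation of absolute continuity, null sets and $L^2_{\mathrm{loc}}$) makes explicit the regularity and the sufficiency direction that the paper's proof leaves implicit.
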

\vspace{-.5cm}
\begin{pf}
By definition, for every $t$ we have $X(t):= X_\theta(\theta(t))$. Setting $\varphi = \theta(t)$ gives $t = p(\varphi)$, so that
$X(p(\varphi)) = X_\theta(\varphi), $
which expresses the phasor sequence as a function of the phase $\varphi$. Since $X_\theta(\varphi) = \hm{F}_{2\pi}(x_\theta)(\varphi)$, its phasor-derivative with respect to $\varphi$ satisfies, for every $k \in \mathbb{Z}$ (see Theorem~\ref{thm:coincidence_fixed}):
\begin{equation}\label{eq:coincidence_phase_domain}
 \frac{d}{d\varphi} X_{\theta,k}(\varphi) = \frac{d}{d\varphi} X_{\theta,0}(\varphi) \, e^{-\ji k \varphi}.
\end{equation}
On the other hand, the chain rule gives, for every $k \in \mathbb{Z}$,
\[
\dot{X}_k(t) = \omega(t) \frac{d}{d\varphi} X_{\theta,k}(\varphi) \Big|_{\varphi = \theta(t)}.
\]
Combining this with~\eqref{eq:coincidence_phase_domain} immediately yields Eq.~\eqref{eq:coincidence_varying}. 
Similarly, for a continuous phase-domain signal $x_\theta$, we have (see~\eqref{recons} with $T_0 = 2\pi$ and $\omega_0 = 1$):
\[
x_\theta(\varphi) = \sum_{k \in \mathbb{Z}} X_{\theta,k}(\varphi) \, e^{\ji k \varphi} + \pi \frac{d}{d\varphi} X_{\theta,0}(\varphi),
\]
which directly yields the formula~\eqref{recons_timedomain}. \hfill $\square$
\end{pf}
\section{Exact Variable-Frequency Harmonic Model}\label{sec:exact_model}
Building on the variable-frequency harmonic representation introduced in the previous section, we now derive an exact and compact harmonic-domain differential model describing the evolution of variable-frequency phasors for general nonlinear, time-varying systems. This result provides a rigorous foundation for both approximation and control synthesis developed in the sequel.
Consider a general nonlinear time-varying system
\begin{equation}\label{eq:time_domain_system}
 \dot{x}(t) = f(t, x(t)),
\end{equation}
where $x(t) \in \mathbb{R}^n$ and $f$ satisfies the Carathéodory conditions ensuring existence of solutions. Using the phase-domain representation $x_\theta(\phi) = x(p(\phi))$, system~\eqref{eq:time_domain_system} can be equivalently expressed in the phase domain as
\begin{equation}\label{eq:phase_domain_system}
 \frac{d}{d\phi} x_\theta(\phi)
 = \frac{f(p(\phi), x_\theta(\phi))}{\omega(p(\phi))}.
\end{equation}
Applying Theorem~\ref{equiv_mod} to the phase-domain system yields the infinite-dimensional phasor dynamics
\begin{equation}\label{eq:phase_phasor_dyn_eq}
 \frac{d}{d\phi} X_\theta(\phi)
 = \hm{F}_{2\pi}\!\left(\frac{f_\theta(\cdot, x_\theta(\cdot))}{\omega_\theta(\cdot)}\right)(\phi)
 - \hm{N} X_\theta(\phi),
\end{equation}
where $f_\theta(\phi, x_\theta(\phi)) = f(p(\phi), x_\theta(\phi))$ and
$\omega_\theta(\phi) = \omega(p(\phi))$.
We now translate this phase-domain formulation back to the time domain, yielding an exact variable-frequency harmonic model.
\begin{thm}\label{thm:exact_model}
The exact dynamics of the variable-frequency phasor 
$X(t) = \hm{F}_\theta(x)(t)$ associated with~\eqref{eq:time_domain_system} are given by
\begin{equation}\label{eq:exact_model_final}
 \dot{X}(t)
 = \hm{G}(\omega(t))\,\hm{F}_\theta\!\bigl(f(t,x(t))\bigr)
 - \omega(t)\,\hm{N} X(t),
\end{equation}
where $ \hm{G}(\omega(t)) = \idn \otimes \omega(t)\,\hm{T}_\theta^{-1}(\omega)(t)$, and $ \hm{N} = \idn \otimes \diag_{k \in \mathbb{Z}}(\ji k)$.
\end{thm}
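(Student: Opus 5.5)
We start from the phase-domain phasor dynamics \eqref{eq:phase_phasor_dyn_eq}, which follows from Theorem~\ref{equiv_mod} applied to \eqref{eq:phase_domain_system} with $T_0=2\pi$ and $\omega_0=1$. We then translate each term back to the time variable. The plan has three steps.

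\textbf{Step 1: chain rule.} Since $X(t)=X_\theta(\theta(t))$ and $\dot\theta=\omega$, we have, for a.e.\ $t$,
\[
\dot X(t)=\omega(t)\,\frac{d}{d\varphi}X_\theta(\varphi)\Big|_{\varphi=\theta(t)} .
\]
Substituting \eqref{eq:phase_phasor_dyn_eq} immediately produces the term $-\omega(t)\hm N X(t)$. It also produces the term $\omega(t)\,\hm F_{2\pi}\bigl(f_\theta/\omega_\theta\bigr)(\theta(t))$.

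\textbf{Step 2: rewrite the nonlinear term.} By the definition of $\hm F_\theta$, evaluating $\hm F_{2\pi}(g_\theta)$ at $\theta(t)$ is exactly $\hm F_\theta(g)(t)$, where $g(t)=f(t,x(t))/\omega(t)$. So it remains to express $\hm F_\theta(f/\omega)$ through $\hm F_\theta(f)$. For this we use Property~\ref{prod} in the phase domain, with window $2\pi$. Since $\omega>0$ is continuous, $\omega_\theta$ is bounded away from zero on every window, so $1/\omega_\theta\in L^\infty_{loc}$. Write $f_\theta=\omega_\theta\cdot(f_\theta/\omega_\theta)$. Property~\ref{prod} then gives
\[
\hm F_{2\pi}(f_\theta)=\hm T_{2\pi}(\omega_\theta)\,\hm F_{2\pi}(f_\theta/\omega_\theta).
\]
Evaluating at $\varphi=\theta(t)$ and denoting $\hm T_\theta(\omega)(t):=\hm T_{2\pi}(\omega_\theta)(\theta(t))$, we obtain $\hm F_\theta(f)=\hm T_\theta(\omega)\,\hm F_\theta(f/\omega)$.

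\textbf{Step 3: invert the Toeplitz operator.} Step 2 only gives a relation; to close it we need $\hm T_\theta(\omega)(t)$ to be invertible. This is the main obstacle. We plan to use the multiplicative property of Property~\ref{prod}:
\[
\hm T(\omega_\theta)\hm T(1/\omega_\theta)=\hm T(1)=\mathcal I .
\]
This shows that $\hm T_\theta(1/\omega)$ is a bounded right inverse, and the symmetric computation gives a left inverse. Both rely on the product rule holding for the sliding-window Toeplitz operators, with the same window for both factors. Hence $\hm T_\theta^{-1}(\omega)=\hm T_\theta(1/\omega)$, with norm at most $1/\min\omega$ on the window. If the product rule is only available in the $L^\infty$ form stated, we apply it directly to the bounded functions $\omega_\theta$ and $1/\omega_\theta$ restricted to the window, which suffices.

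Substituting $\hm F_\theta(f/\omega)=\hm T_\theta^{-1}(\omega)\hm F_\theta(f)$ into Step~1 yields \eqref{eq:exact_model_final}, with $\hm G(\omega)=\omega\,\hm T_\theta^{-1}(\omega)$. The factor $\idn\otimes$ handles the vector-valued case componentwise, since the scalar weight $\omega$ acts identically on each component.
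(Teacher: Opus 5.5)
Your proposal is correct and follows the same route as the paper's proof. Both use the chain rule $\dot X(t)=\omega(t)\,\frac{d}{d\varphi}X_\theta(\varphi)\big|_{\varphi=\theta(t)}$, substitute the phase-domain model, and then apply Property~\ref{prod} to turn $\hm{F}_\theta(f/\omega)$ into $(\idn\otimes\hm{T}_\theta^{-1}(\omega))\,\hm{F}_\theta(f)$.

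The only difference is in that last step. The paper applies the product rule directly with multiplier $\idn/\omega$ and identifies $\hm{T}_\theta(1/\omega)$ with $\hm{T}_\theta^{-1}(\omega)$ without comment. You instead justify this identification through $\hm{T}(\omega)\hm{T}(1/\omega)=\hm{T}(1)=\hm{I}$ on a common window, which makes explicit a step the paper leaves implicit.
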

\vspace{-.5cm}
\begin{pf}
By definition, $X(t)=X_\theta(\theta(t))$. Applying the chain rule yields
$
\dot{X}(t)
= \omega(t)\frac{d}{d\phi}X_\theta(\phi)\Big|_{\phi=\theta(t)}.
$
Substituting~\eqref{eq:phase_phasor_dyn_eq} gives
\[
\dot{X}(t)
= \omega(t)\hm{F}_{2\pi}\!\left(\frac{f_\theta}{\omega_\theta}\right)(\theta(t))
- \omega(t)\hm{N}X(t).
\]
Since
$
\hm{F}_{2\pi}\!\left(\frac{f_\theta}{\omega_\theta}\right)(\theta(t))
= \hm{F}_\theta\!\left(\frac{f(t,x(t))}{\omega(t)}\right),
$
Property~\ref{prod} yields
\[
\hm{F}_\theta\left(\frac{f}{\omega}\right) 
 = \hm{T}_\theta\big(\idn / \omega \big) \, \hm{F}_\theta(f) 
 = (\idn \otimes \hm{T}_\theta^{-1}(\omega)) \, \hm{F}_\theta(f).
\]
The result follows by linearity and noting that $\omega(t) \hm{N} X_\theta(\theta(t)) = \omega(t) \hm{N} X(t)$. \hfill $\square$
\end{pf}
When $\omega(t)\equiv\omega_0$, $\hm{T}_\theta(\omega)=\omega_0\hm I$ and~\eqref{eq:exact_model_final} reduces exactly to the classical fixed-frequency harmonic model.
We conclude this section by relating~\eqref{eq:exact_model_final} to existing exact formulations of variable-frequency dynamic phasors.
{In~\cite[Appendix A]{sanders_generalized_1991}, the evolution of the phasors are given by
\begin{equation}\label{eq:SandersV}
\begin{aligned}
 \dot X(t) =&\;
 x(t-T(t)\omega(t-T(t))\dot T(t))e^{-\hm N \theta(t)} \\
 &+ \hm F_\theta\big(f(t,x(t) )\big) + \left(\hm T_\theta(\tfrac{\dot \omega}{ \omega})- \hm N \hm T_\theta(\omega) \right)X.
\end{aligned}
\end{equation}}
Although~\eqref{eq:SandersV} is equivalent to~\eqref{eq:exact_model_final}
under suitable regularity assumptions, it involves explicit delays,
nonlinear couplings, and dependence on $\dot{\omega}(t)$.
In contrast,~\eqref{eq:exact_model_final} concentrates all variable-frequency
effects into the single operator $\hm{G}(\omega(t))$, avoids explicit delays and
frequency derivatives, and requires only $C^1$ regularity of the phase
$\theta(t)$, making it particularly amenable to control design.
\section{Parameter-Varying (PV) Approximation}\label{sec:lpv_approximation}
While the exact model of Theorem~\ref{thm:exact_model} provides a complete
description of variable-frequency harmonic dynamics, its direct use in control
synthesis is generally impractical due to the time-varying operator
$\hm{G}(\omega(t))$, which depends non-linearly on the frequency signal
$\omega$. This section derives a tractable parameter-varying approximation and
establishes a precise and computable bound on the resulting modeling error.
\subsection{An approximated PV harmonic model}
Starting from the exact model~\eqref{eq:exact_model_final}, the dynamics can be
decomposed as
\begin{equation}\label{eq:pv_decomposition}
 \dot{X}(t)
 = \underbrace{\bigl(\hm{F}_\theta(f(t,x)) - \omega(t)\hm{N}X(t)\bigr)}_{\text{Nominal part}}
 + \hm{E}(t),
\end{equation}
where the error term $ \hm{E}(t)$ is given by
\begin{equation}\label{eq:pv_error}
 \hm{E}(t)
 = (\idn\otimes\Delta_\omega(t))\,\hm{F}_\theta(f(t,x)),
\end{equation} 
with $ \Delta_\omega(t) = \omega(t)\hm{T}_\theta^{-1}(\omega)(t) - \hm{I}$.
The nominal part of~\eqref{eq:pv_decomposition} has the structure of a parameter-varying system whereas $\hm{E}(t)$ captures the deviation with respect to the exact model~\eqref{eq:exact_model_final}. In order to quantify this deviation, we introduce the following notion.
\begin{definition} 
For a given time $t$ and $\tau \in [t - T(t), \, t]$, the relative variation of the frequency $\omega$ at time $\tau$ with respect to its value at time $t$ is defined as
\begin{equation}\label{eq:relative_variation}
 \delta_\omega(\tau, t) \triangleq \frac{\omega(t) - \omega(\tau)}{\omega(\tau)}.
\end{equation}
\end{definition}
The following theorem establishes a direct link between the Toeplitz operator $\Delta_\omega(t)$ and the relative frequency variation over the sliding pseudo-period, and provides a simple characterization of its induced $\ell^2$-norm.
\begin{thm}\label{thm:lpv_approximation}
For every $t$,
\begin{equation}\label{eq:delta_omega}
 \Delta_\omega(t)
 = \hm{T}_\theta(\delta_\omega(\cdot,t))(t),
\end{equation}
and its induced $\ell^2$-operator norm satisfies
\begin{equation}\label{eq:epsilon_criterion}
 \epsilon(t)
:= \|\Delta_\omega(t)\|_{\ell^2}
 = \sup_{\tau\in[t-T(t),t]}
 \left|\frac{\omega(t)-\omega(\tau)}{\omega(\tau)}\right|.
\end{equation}
\end{thm}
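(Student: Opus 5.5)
We work in the phase domain, where $\hm{T}_\theta$ is simply the Toeplitz transformation $\hm{T}_{2\pi}$ applied to phase-domain functions and evaluated at $\varphi=\theta(t)$. Fix $t$ and regard $\omega(t)$ as a constant $c>0$. By the multiplicative property in Property~\ref{prod}, $\hm{T}_\theta(1/\omega)\,\hm{T}_\theta(\omega)=\hm{T}_\theta(1)=\hm{I}$. Here $\omega$ and $1/\omega$ are both $L^\infty$ on the window, because $\omega$ is continuous and strictly positive on the compact interval $[t-T(t),t]$. By symmetry the product in the other order is also $\hm{I}$, so $\hm{T}_\theta^{-1}(\omega)=\hm{T}_\theta(1/\omega)$, which is the identity already used in the proof of Theorem~\ref{thm:exact_model}.

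By linearity of the Toeplitz map, and since the Toeplitz operator of a constant $c$ is $c\,\hm{I}$, we get
\[
\Delta_\omega(t)=c\,\hm{T}_\theta(1/\omega)(t)-\hm{I}
=\hm{T}_\theta\!\left(\tfrac{c}{\omega(\cdot)}-1\right)(t)
=\hm{T}_\theta\!\left(\tfrac{\omega(t)-\omega(\cdot)}{\omega(\cdot)}\right)(t).
\]
The function inside is $\delta_\omega(\cdot,t)$. We have to be careful about what ``the Toeplitz operator at time $t$'' means. Its entries are the Fourier coefficients computed over the sliding window, i.e.\ by~\eqref{eq:dp_def_phase}--\eqref{eq:dp_def_time} over $\tau\in[t-T(t),t]$ (phase window $[\theta(t)-2\pi,\theta(t)]$). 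The parameter $t$ in $\delta_\omega(\tau,t)$ enters only as the frozen constant $c$ while we integrate over $\tau$. Hence freezing $c$ before taking coefficients is legitimate, and this proves~\eqref{eq:delta_omega}.

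For the norm, Property~\ref{prod} gives $\|\hm{T}(a)\|_{\ell^2}=\|a\|_{L^\infty}$, where the $L^\infty$ norm is taken over one period of the function whose Fourier coefficients build the operator. Here that function is $\phi\mapsto\delta_\omega(p(\phi),t)$ on the phase window $[\theta(t)-2\pi,\theta(t)]$, extended $2\pi$-periodically. Since $p$ is a diffeomorphism onto $[t-T(t),t]$, its essential supremum equals the essential supremum of $|\delta_\omega(\tau,t)|$ over $\tau\in[t-T(t),t]$. Because $\omega$ is continuous and positive, $\delta_\omega(\cdot,t)$ is continuous on this compact interval, so the essential supremum is a true supremum (indeed a maximum), which gives~\eqref{eq:epsilon_criterion}.

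The main obstacle is justifying the norm identity $\|\hm{T}(a)\|=\|a\|_\infty$ for the frozen window. It is the classical Toeplitz (Laurent) operator norm result, which holds for bi-infinite Toeplitz operators via unitary equivalence of $\ell^2(\mathbb{Z})$ with $L^2$ on the circle through Fourier series. We simply invoke Property~\ref{prod}. The only point to check is that the periodic extension from the window does not change the supremum, and it does not, since it merely repeats the same values.
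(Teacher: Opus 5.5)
Your proposal is correct and follows essentially the paper's argument. Like the paper, you treat $\omega(t)$ as a constant inside the sliding window, use linearity together with $\hm{T}_\theta^{-1}(\omega)=\hm{T}_\theta(1/\omega)$ to identify $\Delta_\omega(t)$ with $\hm{T}_\theta(\delta_\omega(\cdot,t))(t)$, and then invoke the norm identity of Property~\ref{prod}. Your extra checks fill in details the paper leaves implicit: invertibility via the multiplicative property of bi-infinite Toeplitz operators, and the essential supremum equalling the supremum by continuity of $\omega$ on the compact window.
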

\vspace{-.5cm}
\begin{pf}
Using the linearity of the Toeplitz operator and as $\omega(t)$ is constant with respect to the integration variable $\tau$ in the Fourier decomposition, we have:
\begin{align*}
 \hm{T}_\theta(\delta_\omega(\cdot, t))(t) 
 &= \hm{T}_\theta\left(\frac{\omega(t)}{\omega(\cdot)} - 1\right)(t) \\
 &= \omega(t) \, \hm{T}_\theta^{-1}(\omega)(t) - \hm{I} = \Delta_\omega(t).
\end{align*}
Finally,~\eqref{eq:epsilon_criterion} follows from the relation (see Property~\ref{prod} or~\cite{bottcher_introduction_1999}):
$ \Delta_\omega(t)\|_{\ell^2} = \|\delta_\omega(\cdot, t)\|_{L^\infty}$.
\hfill $\square$
\end{pf}
The quantity $\epsilon(t)$, given by~\eqref{eq:epsilon_criterion}, measures the maximal relative frequency variation
over one pseudo-period. When $\epsilon(t)$ is small, the system locally behaves
as if the frequency were constant. Motivated by this observation, the following simplified model is obtained from the exact model~\eqref{eq:exact_model_final} by applying the approximation
\begin{equation}\label{eq:main_approximation}
\omega(t)\,\hm{T}_\theta^{-1}(\omega)(t) \approx \hm{I}.
\end{equation}
\begin{definition}\label{APVHM}
Under the condition $\epsilon(t)<\varepsilon$ for some positive $\varepsilon$, the
approximated PV harmonic model is given by
\begin{equation}\label{eq:lpv_model}
 \dot{X}(t)
 \approx \hm{F}_\theta(f(t,x)) - \omega(t)\hm{N}X(t),
\end{equation}
with $\omega(t)$ treated as a time-varying parameter.
\end{definition}
\begin{thm}\label{relative_error}
The relative modeling error induced by the PV approximation satisfies
\begin{equation}\label{error}
 \frac{\| \hm{E}(t)\|_{\ell^2}}
 {\|\hm{F}_\theta(f(t,x))\|_{\ell^2}}
 \le \epsilon(t),
\end{equation}
with $\epsilon(t)=0$ in the constant-frequency case.
\end{thm}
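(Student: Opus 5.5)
The bound will follow directly from the error expression~\eqref{eq:pv_error} combined with the operator-norm characterization of Theorem~\ref{thm:lpv_approximation}. First I will fix $t$ and write $\hm{E}(t) = (\idn\otimes\Delta_\omega(t))\,\hm{F}_\theta(f(t,x))$. Then I will bound its $\ell^2$-norm by the induced operator norm of $\idn\otimes\Delta_\omega(t)$ times $\|\hm{F}_\theta(f(t,x))\|_{\ell^2}$. Dividing by the latter (assumed nonzero; otherwise $\hm{E}(t)=0$ and the statement is read trivially) gives the left-hand side of~\eqref{error}.

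The main step is to show $\|\idn\otimes\Delta_\omega(t)\|_{\ell^2} = \|\Delta_\omega(t)\|_{\ell^2}$. I expect this to be the only point needing care. The plan is to note that $\idn\otimes\Delta_\omega(t)$ acts blockwise: it applies the same scalar Toeplitz operator $\Delta_\omega(t)$ to each of the $n$ component sequences $\hm{F}_\theta(f_i)$. Hence $\|(\idn\otimes\Delta_\omega)Y\|^2 = \sum_i \|\Delta_\omega Y_i\|^2 \le \|\Delta_\omega\|^2\sum_i\|Y_i\|^2$, so the norm of the Kronecker product is at most $\|\Delta_\omega(t)\|_{\ell^2}$. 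Equivalently, by~\eqref{eq:delta_omega}, $\idn\otimes\Delta_\omega(t)$ is the Toeplitz-block operator of the matrix function $\delta_\omega(\cdot,t)\idn$. Property~\ref{prod} then gives norm $\|\delta_\omega(\cdot,t)\idn\|_{L^\infty} = \|\delta_\omega(\cdot,t)\|_{L^\infty}$. By Theorem~\ref{thm:lpv_approximation}, this equals $\epsilon(t)$, the supremum over the window $[t-T(t),t]$, which is where the phase-domain window $[\theta(t)-2\pi,\theta(t)]$ maps back.

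Finally, for the constant-frequency case, I will observe that $\omega(\tau)=\omega(t)$ for all $\tau$, so $\delta_\omega(\cdot,t)\equiv 0$. Hence $\Delta_\omega(t)=0$ and $\epsilon(t)=0$, consistent with the remark after Theorem~\ref{thm:exact_model} that $\hm{T}_\theta(\omega)=\omega_0\hm{I}$. Bound~\eqref{error} is tight in the sense that it uses the exact induced norm. Attaining it requires only a sequence $\hm{F}_\theta(f)$ approximately aligned with the maximizing direction of $\Delta_\omega(t)$. That remark is not needed for the inequality itself.
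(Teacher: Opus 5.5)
Your proposal is correct and follows the paper's own proof: you bound $\|\mathcal{E}(t)\|_{\ell^2}$ from \eqref{eq:pv_error} by the induced norm of $\mathrm{Id}_n\otimes\Delta_\omega(t)$, use $\|\mathrm{Id}_n\otimes\Delta_\omega(t)\|_{\ell^2}=\|\Delta_\omega(t)\|_{\ell^2}$, and identify the latter with $\epsilon(t)$ via Theorem~\ref{thm:lpv_approximation}. The only difference is that you justify the Kronecker-norm step (blockwise, or via Property~\ref{prod} applied to $\delta_\omega(\cdot,t)\mathrm{Id}_n$), which the paper simply asserts.
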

\vspace{-.5cm}
\begin{pf}
The result follows directly from~\eqref{eq:pv_error} and
$\|\idn\otimes\Delta_\omega(t)\|_{\ell^2}
=\|\Delta_\omega(t)\|_{\ell^2}$. \hfill $\square$
\end{pf}
\subsection{Analysis and comparison with classical heuristics}
We now analyze the behavior of the error measure $\epsilon(t)$ introduced in
Section~\ref{sec:lpv_approximation} and clarify its relationship with the
heuristic assumptions commonly used to justify PV approximations in the
literature. Our objective is to show that $\epsilon(t)$ provides a precise and
non-conservative quantification of approximation accuracy. Existing justifications of PV approximations {of model~\eqref{eq:SandersV}} typically rely on an informal {``slowly varying $\omega$''} assumption~\cite{sanders_generalized_1991, yang_dynamic_2016}, leading to:
\begin{enumerate}
 \item $\dot{T}(t) \approx 0$,
 \item $\omega(t) \approx \omega(t-T(t))$,
 \item $\dot{\omega}(t)/\omega(t) \approx 0$.
\end{enumerate}
The precise implications of these assumptions for the modeling error are rarely made explicit. From~\eqref{eq:pseudo_period_derivative}, the condition $\dot{T}(t)\approx 0$
implies $\omega(t)\approx \omega(t-T(t))$, which in turn corresponds to $ \delta_\omega(t-T(t),t) \approx 0$.
This reasoning, however, only constrains the frequency mismatch at the
\emph{boundary} of the pseudo-period. It does not control the frequency
variations occurring inside the interval $[t-T(t),\,t]$. In contrast, 
the proposed quantity $\epsilon(t)$ uniformly bounds the relative frequency variation over the \emph{entire}
pseudo-period. As such, $\epsilon(t)\approx 0$ constitutes a 
non-conservative condition for the validity of the PV approximation.

To make this distinction explicit, consider a linearly increasing frequency
profile $\omega(t)=at+\omega_0$, with $a=\dot{\omega}>0$. In this case, the error admits the closed-form expression
\begin{equation}\label{eq:epsilon_ramp}
\epsilon(t)
={\left(1-\frac{4\pi a}{{\omega(t)}^2}\right)}^{-\frac{1}{2}}-1
\approx \frac{2\pi \dot{\omega(t)}}{{\omega(t)}^2}.
\end{equation}
This expression shows that $\epsilon(t)$ decreases monotonically with $\omega(t)$.
Hence, for a constant acceleration, the PV approximation becomes increasingly
accurate at higher frequencies.

For a prescribed admissible modeling error $\varepsilon$, solving
$\epsilon(t)=\varepsilon$ in~\eqref{eq:epsilon_ramp} yields the maximum allowable
frequency rate
\begin{equation}\label{amax}
a_{\max}
=\frac{{\omega(t)}^2}{4\pi}\left(1-\frac{1}{{(1+\varepsilon)}^{2}}\right).
\end{equation}
This relation makes explicit how the admissible rate of frequency variation
scales with $\omega(t)$, a dependence that is not captured by classical
heuristics. Enforcing $\epsilon(t)=\varepsilon$ for all $t$ leads to the differential
equation
\begin{equation}\label{eq:hyperbolic_omega_de}
\dot{\omega}(t)
=K{\omega(t)}^2,
\quad
K=\frac{1}{4\pi}\left(1-\frac{1}{{(1+\varepsilon)}^{2}}\right),
\end{equation}
whose solution
$\omega(t)={({\omega_0}^{-1}-Kt)}^{-1}$ exhibits finite-time blow-up. In this case,
$ \frac{\dot{\omega}(t)}{\omega(t)} = K\omega(t)$,
which grows unboundedly with $\omega(t)$, even though $\epsilon(t)$ is held
constant by construction. This example demonstrates that the heuristic condition $\dot{\omega}/\omega \approx 0$ does not reliably characterize model validity. Figure~\ref{fig:compare_criteria} provides simulations with more complex frequency profiles to confirm this observation: both $|\dot{\omega}|$ and $|\dot{\omega}/\omega|$ may take large values while
$\epsilon(t)$ decreases, indicating improved approximation accuracy.

Overall, these results demonstrate that $\epsilon(t)$ provides a rigorous and
non-conservative characterization of the \emph{small local relative frequency
variation} assumption underlying the PV harmonic framework. Unlike classical
heuristics, it directly quantifies the modeling error induced by frequency
variations over the pseudo-period.
\begin{figure}[t]
 \centering
 \includegraphics[width=\columnwidth]{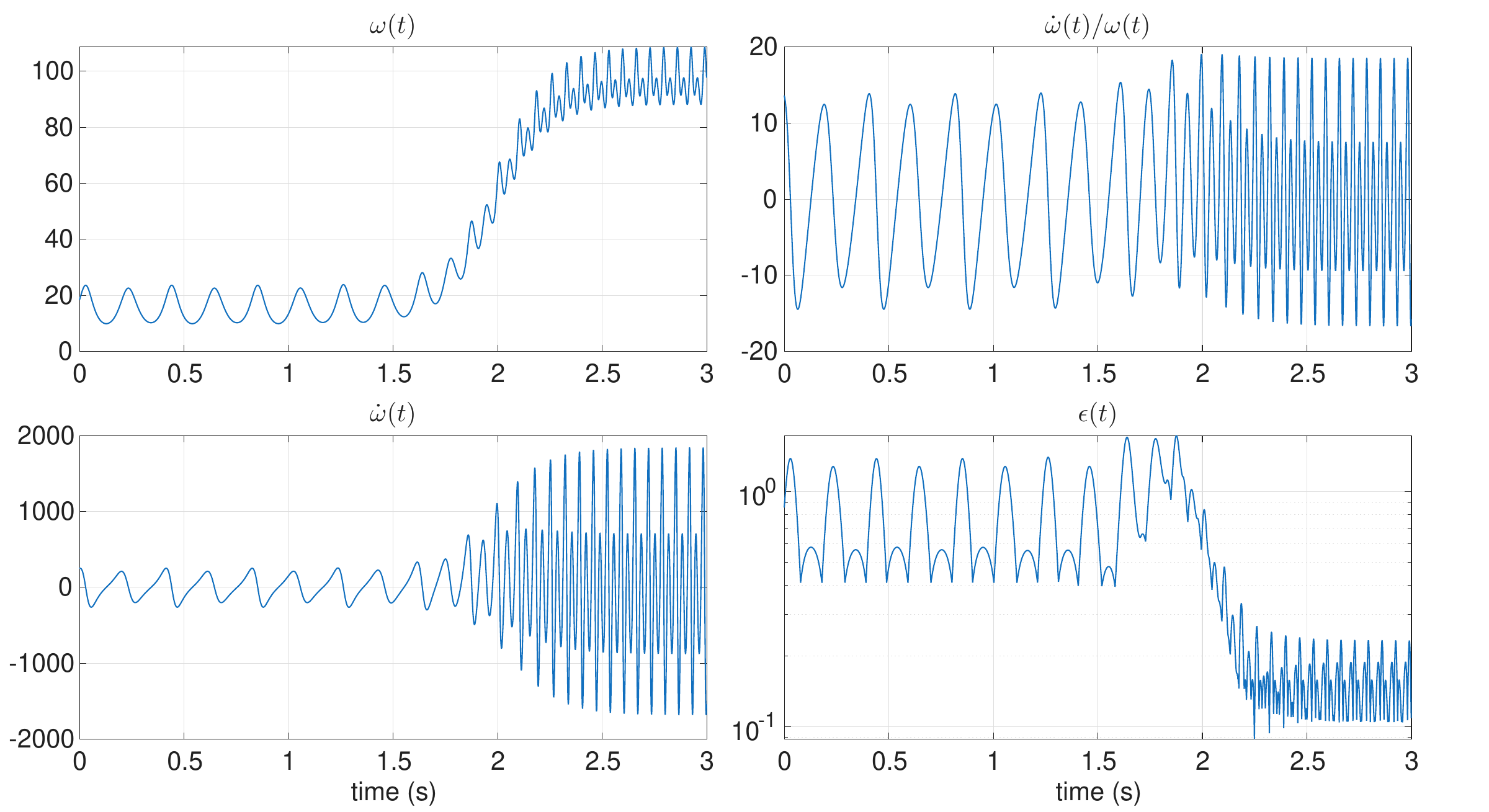}
 \caption{$\epsilon(t)$ provides an accurate measure of the modeling error, whereas the heuristic $\vert \dot{\omega}/\omega\vert $ and $\dot\omega$ are poor indicators.}\label{fig:compare_criteria}
\end{figure}
\section{Affine Linear phase-periodic Systems}\label{sec:robust_synthesis_LPP}
This section shows that, for a practically relevant class of phase-periodic systems, variable-frequency effects can be handled exactly in the harmonic domain. In contrast to the approximation-based approach of Section~\ref{sec:lpv_approximation}, we focus on linear phase-periodic systems whose harmonic representation depends affinely on the instantaneous frequency. Such a structure naturally arises in applications such as electrical drives, aircraft power systems, and modulated converters. For this class, we establish an exact Lyapunov stability condition for the autonomous dynamics, which then serves as the foundation for state-feedback controller synthesis via an appropriate system augmentation.
\subsection{Exact Lyapunov stability condition}
We consider linear systems whose coefficients are periodic in the phase and depend affinely on the instantaneous frequency. These systems are described by the following affinely frequency-modulated phase-periodic (AFM-LPP) model.
\begin{definition} AFM-LPP systems are described by
\begin{equation}\label{eq:afmlpp_time_domain}
\begin{aligned}
 \dot{x}(t) =& \big(A_0(\theta(t)) + \omega(t) A_1(\theta(t))\big)x(t) \\ 
 &+ \big(B_0(\theta(t)) + \omega(t) B_1(\theta(t))\big)u(t),
\end{aligned}
\end{equation}
where $A_i(\cdot)$ and $B_i(\cdot)$ are $L^\infty$ matrix-valued functions that are $2\pi$-periodic in $\theta$, and $\theta$ denotes a phase function as defined in Definition~\ref{def:phase_function}.
\end{definition}
When $A_1\equiv0$ and $B_1\equiv0$, the model reduces to a classical linear phase-periodic system. The AFM-LPP structure therefore strictly generalizes the fixed-frequency case while preserving a form that admits exact harmonic-domain analysis. 

Applying the variable-frequency SFD to~\eqref{eq:afmlpp_time_domain} and invoking Theorem~\ref{thm:exact_model} yields the exact harmonic representation
\begin{equation}\label{eq:HarmExact_full}
\begin{aligned}
 \dot{X}(t) =& \hm{G}(\omega(t))\big(\hm{A}_0 X(t) + \hm{B}_0 U(t)\big) \\
 &+ \omega(t)\big(\hm{A}_1 X(t) + \hm{B}_1 U(t) - \hm{N} X(t)\big),
\end{aligned}
\end{equation}
where the Toeplitz block operators $\hm{A}_i=\hm{T}_\theta(A_i)$ and
$\hm{B}_i=\hm{T}_\theta(B_i)$, $i=0,1$, are constant and bounded on $\ell^2$.

We first focus on the autonomous dynamics
\begin{equation}\label{eq:auto}
\dot{x}(t)=\big(A_0(\theta(t))+\omega(t)A_1(\theta(t))\big)x(t),
\end{equation}
with associated harmonic model
\begin{equation}\label{eq:harmo}
\dot{X}(t)=\big(\hm{G}(\omega(t))\hm{A}_0+\omega(t)(\hm{A}_1-\hm{N})\big)X(t).
\end{equation}
The following theorem provides an exact Lyapunov stability condition for these dynamics. Its key feature is that stability under arbitrary admissible frequency trajectories can be certified by enforcing Lyapunov inequalities at a finite number of frequency vertices only. This is made possible by two structural properties induced by affine frequency dependence: convexity with respect to $\omega$, and commutativity of Toeplitz operators generated by scalar phase functions.
\begin{thm}\label{main}
Consider the constant Toeplitz block operators $\hm A_0$ and $\hm A_1$, bounded on $\ell^2$.
Assume that there exists a Toeplitz block operator $\hm P=\hm P^*\succ 0$, bounded on $\ell^2$, such that
\begin{align}\label{eq:PolyStab}
\hm P\big(\hm A_0 + \omega_i(\hm A_1 - \hm N)\big)
+ {\big(\hm A_0 + \omega_i(\hm A_1 - \hm N)\big)}^*\hm P \prec 0,
\end{align}
for all $\omega_i \in \{\omega_{\min},\omega_{\max}\}$.
Let $\omega(t)$ be any time-varying frequency of the form
\[
\omega(t)=\alpha(t)\,\omega_{\min}+\big(1-\alpha(t)\big)\,\omega_{\max},
\qquad \alpha(t)\in(0,1).
\]
Then the harmonic Lyapunov inequality
\begin{align}\label{e2}
\hm P\big(\hm{G}(\omega(t))\hm A_0 + \omega(t)(\hm A_1 - \hm N)\big)+{(\cdot)}^* \prec 0
\end{align}
holds for all $t$.
Moreover, since $\hm P$, $\hm A_0$, and $\hm A_1$ admit $2\pi$-periodic and $L^\infty$ representatives,~\eqref{e2} is equivalent to the following differential Lyapunov inequality, which holds almost everywhere:
\begin{equation}\label{angleLyap}
\dot P(\theta)
+ P(\theta)\big(A_0(\theta) + \omega(t)A_1(\theta)\big)
+ {(\cdot)}^\top \prec 0,
\end{equation}
with $P(\theta)={P(\theta)}^\top\succ0$ and $\dot \theta(t)=\omega(t)$.
\end{thm}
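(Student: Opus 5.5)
The plan is to move everything into the phase domain, where the harmonic inequalities become pointwise matrix inequalities in $\theta$. There, affinity in the frequency and the scalar nature of the factor $\omega(t)/\omega(\cdot)$ turn the problem into a convex-combination argument.

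\textbf{Step 1: translate the vertex conditions.} Let $P(\cdot)$ be the $2\pi$-periodic $L^\infty$ representative of $\hm P$, so that $\hm P=\hm T_\theta(P)$. I will use the standard correspondence from the fixed-frequency theory (window $2\pi$, $\omega_0=1$), namely $-\hm P\hm N-\hm N^*\hm P=\hm T(\tfrac{d}{d\phi}P)$, together with Property~\ref{prod} and the norm identity $\|\hm T(M)\|=\|M\|_{L^\infty}$. Applied to a Hermitian symbol, this identity gives $\hm T(M)\prec0$ if and only if $M(\phi)\preceq-\eta \mathrm{Id}$ a.e. for some $\eta>0$. Condition~\eqref{eq:PolyStab} at $\omega_i$ is then equivalent to
\[
\omega_i P'(\phi)+P(\phi)\big(A_0(\phi)+\omega_iA_1(\phi)\big)+(\cdot)^\top\preceq-\eta\,\mathrm{Id}\quad\text{a.e. } \phi .
\]
This implicitly requires $P\in W^{1,\infty}$, which I take as part of the hypothesis that the left-hand side is a bounded operator.

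\textbf{Step 2: extend to the whole interval.} The left-hand side above is affine in $\omega_i$. Taking convex combinations of the two vertex inequalities therefore gives, for every $\bar\omega\in[\omega_{\min},\omega_{\max}]$ and a.e. $\phi$,
\[
\Phi(\phi,\bar\omega):=\bar\omega P'(\phi)+P(\phi)\big(A_0(\phi)+\bar\omega A_1(\phi)\big)+(\cdot)^\top\preceq-\eta\,\mathrm{Id}.
\]
Since $\omega(t)\in[\omega_{\min},\omega_{\max}]$, choosing $\bar\omega=\omega(t)$ and $\phi=\theta(t)$ yields~\eqref{angleLyap}, because $\frac{d}{dt}P(\theta(t))=\omega(t)P'(\theta(t))$.

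\textbf{Step 3: handle the operator $\hm G$ (main obstacle).} This is the step I expect to be hardest, because $\hm G(\omega(t))$ does not commute with $\hm N$. I plan to avoid any commutation with $\hm N$ by computing the symbol of~\eqref{e2} directly. Over the window $[\theta(t)-2\pi,\theta(t)]$, $\hm G(\omega(t))$ is the Toeplitz operator of the positive scalar function $g_t(\phi)=\omega(t)/\omega_\theta(\phi)$. Because $g_t$ is scalar, $\hm T(g_t)$ commutes with the block Toeplitz operators $\hm P$ and $\hm A_0$, and Property~\ref{prod} gives $\hm P\hm G\hm A_0=\hm T(g_tPA_0)$. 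The operator in~\eqref{e2} is then the Toeplitz operator of
\[
\frac{\omega(t)}{\omega_\theta(\phi)}\Big(\omega_\theta(\phi)P'+P\big(A_0+\omega_\theta(\phi)A_1\big)+(\cdot)^\top\Big)=\frac{\omega(t)}{\omega_\theta(\phi)}\,\Phi\big(\phi,\omega_\theta(\phi)\big).
\]
Since $\omega_\theta(\phi)\in[\omega_{\min},\omega_{\max}]$ on the window, Step 2 applies pointwise with $\bar\omega=\omega_\theta(\phi)$. The positive scalar factor is bounded below by $\omega_{\min}/\omega_{\max}$, so the symbol is $\preceq-\eta\,\omega_{\min}/\omega_{\max}\,\mathrm{Id}$ a.e., and the operator is therefore $\prec0$, which proves~\eqref{e2}. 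The uniform margin also makes precise the claimed equivalence of~\eqref{e2} with~\eqref{angleLyap}: both are Toeplitz-symbol statements of the single pointwise inequality $\Phi\prec0$ along admissible frequencies.

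The delicate points to verify are the symbol identity for $\hm P\hm N$ (it needs absolute continuity of $P$) and the uniform definiteness needed to pass from a.e. symbol inequalities to strict operator inequalities.
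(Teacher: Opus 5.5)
Your proof is correct and uses the same mechanism as the paper: convexity in $\omega$ with phase-dependent weights, followed by rescaling with the positive scalar Toeplitz operator $\hm G(\omega(t))$. The difference is that you carry it out on symbols rather than on operators, and that difference matters.

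The paper multiplies the vertex inequalities \eqref{eq:PolyStab} on the left by $\idn\otimes\hm T_\theta(\alpha)$ and $\idn\otimes(\hm I-\hm T_\theta(\alpha))$, sums to get \eqref{eq:inter}, then multiplies by $\hm G(\omega(t))$. It justifies each step by saying these scalar Toeplitz operators commute with ``the involved operators''. It obtains \eqref{angleLyap} by dividing \eqref{eq:inter} by $\idn\otimes\hm T_\theta(\omega)$ and invoking the known equivalence between harmonic and differential Lyapunov inequalities.

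In general none of these multipliers commutes with $\hm N$. The manipulation is legitimate only because each complete Lyapunov expression is itself a Toeplitz operator, with symbol $\omega_i\frac{d}{d\phi}P+P(A_0+\omega_iA_1)+(\cdot)^\top$. That is exactly your identity $-\hm P\hm N-\hm N^*\hm P=\hm T(\frac{d}{d\phi}P)$. Indeed, as displayed, \eqref{eq:inter} places $\idn\otimes\hm T_\theta(\omega)$ in front of $\hm N$. Its symbol then contains $\frac{d}{d\phi}(\omega_\theta P)$ instead of $\omega_\theta\frac{d}{d\phi}P$, so it differs from the actual weighted sum by $P\,\frac{d}{d\phi}\omega_\theta$. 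Your computation shows what the weighted sum really is, and why multiplying it by $\hm G(\omega(t))$ lands exactly on \eqref{e2}.

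What each route buys:
\begin{itemize}
\item The paper's version is shorter and stays in the operator language reused for the LMI synthesis.
\item Yours makes explicit the regularity needed on $P$ (absolutely continuous with bounded derivative) and the uniform margin $\eta\,\omega_{\min}/\omega_{\max}$ needed for a strict operator inequality.
\item Yours also obtains \eqref{angleLyap} directly by evaluating the pointwise inequality at $\bar\omega=\omega(t)$, $\phi=\theta(t)$, with no external equivalence result.
\end{itemize}

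Your closing remark on the equivalence can be made precise in the same way. By your Step~3, \eqref{e2} at time $t$ says $\Phi(\theta(\tau),\omega(\tau))\prec0$ for a.e.\ $\tau\in[t-T(t),t]$. So \eqref{e2} for all $t$ is the same as \eqref{angleLyap} for a.e.\ $t$, up to uniformity of the margin.
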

\vspace{-.5cm}
\begin{pf}
Consider the Toeplitz transformation:
\[
\hm T_\theta(\omega)
= \hm T_\theta(\alpha)\,\omega_{\min}
+ \big(\hm I-\hm T_\theta(\alpha)\big)\,\omega_{\max}.
\]
Recall that the product of two positive definite operators remains positive definite if these operators commute.
Since $\alpha(t)\in(0,1)$, both operators $\hm T_\theta(\alpha)$ and
$\hm I-\hm T_\theta(\alpha)$ are positive definite.
Moreover, as $\alpha$ is a scalar function, $\hm T_\theta(\alpha)$ commutes with any Toeplitz operator and $\hm T_\theta(\alpha)={\hm T_\theta(\alpha)}^*$.
Using these properties, we multiply~\eqref{eq:PolyStab} on the left by
$\idn\otimes\hm T_\theta(\alpha)$ for $\omega_i=\omega_{\min}$, and by
$\idn\otimes(\hm I-\hm T_\theta(\alpha))$ for $\omega_i=\omega_{\max}$.
Summing the resulting inequalities preserves negative definiteness and yields
\begin{equation}\label{eq:inter}
\hm P\big(\hm A_0 + (\idn\otimes\hm T_\theta(\omega))(\hm A_1 - \hm N)\big)
+{(\cdot)}^* \prec 0.
\end{equation}
Multiplying~\eqref{eq:inter} on the left by
$$\hm{G}(\omega(t))=\omega(t){(\idn\otimes\hm T_\theta(\omega))}^{-1},$$
which is positive definite, self-adjoint and commutes with the involved operators,
directly gives~\eqref{e2}. Finally, since $\hm A_0$, $\hm A_1$, and $\hm P$ admit $2\pi$-periodic representatives,
multiplying~\eqref{eq:inter} on the left by ${(\idn\otimes\hm T_\theta(\omega))}^{-1}$ leads to
\[
\hm P\left[
\hm T_{2\pi}\!\left(\frac{A_0(\phi)}{\omega(p(\phi))}+A_1(\phi)\right)
-\hm N
\right]
+{(\cdot)}^* \prec 0.
\]
This Lyapunov inequality admits an equivalent differential formulation in the phase domain~\cite{vernerey_tblmi_2025}:
\[
P(\phi)\!\left(\frac{A_0(\phi)}{\omega(p(\phi))}+A_1(\phi)\right)
+{ (\cdot)}^\top
+\frac{dP(\phi)}{d\phi} \prec 0.
\]
The inequality~\eqref{angleLyap} follows by multiplication by $\omega(t)$
and application of the chain rule $\dot P = \omega\,\frac{dP}{d\phi}$. \hfill $\square$
\end{pf}
\begin{corollary}\label{coroStabLyap}
Let $\hm P$ and $P(\theta)$ satisfy the conditions of Theorem~\ref{main}.
Then the Lyapunov functions $\hm V(X(t)) = {X(t)}^* \hm P X(t) $ and $ v(t,x) = {x(t)}^\top P(\theta(t)) x(t)$
guarantee asymptotic stability of the systems~\eqref{eq:harmo} and~\eqref{eq:auto}, respectively,
for any frequency function $\omega(t)$ taking values in $[\omega_{\min},\omega_{\max}]$.
\end{corollary}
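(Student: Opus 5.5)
We prove the time-domain statement first and then transport it to the harmonic domain through the SFD isometry. Fix any frequency function $\omega(t)\in[\omega_{\min},\omega_{\max}]$. Such a function can be written as $\omega=\alpha\omega_{\min}+(1-\alpha)\omega_{\max}$ with $\alpha(t)\in[0,1]$. On the endpoints $\alpha\in\{0,1\}$, the strict inequality~\eqref{eq:PolyStab} holds directly at the corresponding vertex, so the convex-combination argument in the proof of Theorem~\ref{main} goes through with closed rather than open weights. Hence Theorem~\ref{main} supplies a $2\pi$-periodic $P(\theta)=P(\theta)^\top\succ0$ satisfying~\eqref{angleLyap} almost everywhere.

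Along a Carathéodory solution of~\eqref{eq:auto}, the function $v(t)=x^\top P(\theta(t))x$ is absolutely continuous. By the chain rule $\dot P=\omega\, dP/d\phi$, it satisfies
\[
\dot v=x^\top\big(\dot P+P(A_0+\omega A_1)+(\cdot)^\top\big)x<0 .
\]

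The strict inequality alone does not give asymptotic stability, so we need a uniform margin. We sharpen Theorem~\ref{main} as follows.
\begin{itemize}
\item Because $\hm P$ is bounded and~\eqref{eq:PolyStab} is strict, there exist $\eta>0$ and $c_1,c_2>0$ with $\hm P\preceq c_2\hm I$, $\hm P\succeq c_1\hm I$, and the left-hand side of~\eqref{eq:PolyStab} bounded above by $-\eta\hm I$ at both vertices.
\item Repeating the convex-combination step with the operators $\idn\otimes\hm T_\theta(\alpha)$ and $\idn\otimes(\hm I-\hm T_\theta(\alpha))$, whose sum is $\hm I$, gives the uniform bound $-\eta\hm I$ in~\eqref{eq:inter}.
\item Multiplying by $\hm G(\omega)$, which is bounded below by $\omega_{\min}/\omega_{\max}$ since $\omega\hm T_\theta^{-1}(\omega)\succeq(\omega_{\min}/\omega_{\max})\hm I$, gives a uniform margin $-\eta'\hm I$ in~\eqref{e2}.
\end{itemize}

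The norm equality of Property~\ref{prod} then translates these operator bounds into pointwise bounds $c_1 I\preceq P(\theta)\preceq c_2 I$ and $\dot P+P(A_0+\omega A_1)+(\cdot)^\top\preceq-\eta'' I$ almost everywhere. This is the step I expect to be the main obstacle: one must check that the passage from~\eqref{e2} to~\eqref{angleLyap} through the phase-domain differential LMI of~\cite{vernerey_tblmi_2025} preserves a uniform margin, including the factor $\omega(t)\ge\omega_{\min}>0$. If it does, then $\dot v\le-(\eta''/c_2)v$, and Grönwall's lemma gives exponential, hence asymptotic, stability of~\eqref{eq:auto}.

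For~\eqref{eq:harmo}, the solution $X=\hm F_\theta(x)$ lies in $\hm H_\theta$ by Theorem~\ref{thm:exact_model}. Differentiating $\hm V(X)=X^*\hm PX$ and using~\eqref{e2} with the margin gives $\dot{\hm V}\le-\eta' \|X\|^2\le-(\eta'/c_2)\hm V$, which again yields exponential decay. Alternatively, one can note that $\hm V(X(t))$ is the sliding phase-window average $\frac{1}{2\pi}\int_{\theta(t)-2\pi}^{\theta(t)}x_\theta^\top P x_\theta\,d\phi$, by Parseval and the product Property~\ref{prod}, so decay of $v$ implies decay of $\hm V$ directly.
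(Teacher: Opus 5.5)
Your route is the paper's: it gives no separate proof and reads the corollary off \eqref{e2} and \eqref{angleLyap} as decrease conditions for $\hm V$ and $v$. You are right that a uniform margin is needed, which the paper leaves implicit, and your bound $\hm G\succeq(\omega_{\min}/\omega_{\max})\hm I$ is correct. Two steps still need repair, however.

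\textbf{The margin.} It does not follow from ``$\hm P$ bounded and \eqref{eq:PolyStab} strict''. For a block Laurent (bi-infinite Toeplitz) operator $\hm T(Q)$ with Hermitian symbol, $X^*\hm T(Q)X<0$ for all $X\neq0$ only means $Q(\phi)\prec0$ a.e. This can degenerate, e.g.\ $Q=-|\sin\phi|\,I$. Likewise, $\hm P\succ0$ alone gives no $c_1>0$. You must read $\succ$ and $\prec$ in the coercive sense ($\succeq c_1\hm I$, $\preceq-\eta\hm I$), and you should say so explicitly.

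\textbf{The conditional step and the multiplication by $\hm G$.} Replace both by a symbol-level argument; this also removes your ``main obstacle''.
\begin{itemize}
\item Since $\hm N^*=-\hm N$ and $[\hm N,\hm T(a)]=\hm T(da/d\phi)$, the left-hand side of \eqref{eq:PolyStab} equals $\hm T(Q_{\omega_i})$ with $Q_w:=PA_0+A_0^\top P+w\big(\frac{dP}{d\phi}+PA_1+A_1^\top P\big)$, which is affine in $w$.
\item By Parseval, $\hm T(Q)$ is unitarily equivalent to multiplication by $Q$. Hence $\hm T(Q)\preceq-\eta\hm I$ if and only if $Q\preceq-\eta I$ a.e.; this same fact also gives you $c_1I\preceq P\preceq c_2I$.
\item The two vertex conditions therefore give $Q_w\preceq-\eta I$ a.e.\ for every $w\in[\omega_{\min},\omega_{\max}]$. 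The interval is closed, so the $\alpha\in(0,1)$ issue disappears.
\item Evaluating at $\phi=\theta(t)$, $w=\omega(t)$, and using $\dot P=\omega\,dP/d\phi$, yields \eqref{angleLyap} with the same margin $\eta$.
\item The left-hand side of \eqref{e2} is the Toeplitz operator of the symbol $\frac{\omega(t)}{\omega_\theta(\phi)}Q_{\omega_\theta(\phi)}(\phi)$ on the window. This symbol satisfies $\preceq-\eta\frac{\omega_{\min}}{\omega_{\max}}I$.
\end{itemize}

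This computation also shows what actually legitimizes multiplying by $\hm G$. It is not that $\hm G$ commutes with the individual operators: it does not commute with $\hm N$ unless $\omega$ is constant over the window. It is that the whole Hermitian expression is block Toeplitz.

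With these two repairs, your Gr\"onwall arguments for \eqref{eq:auto} and \eqref{eq:harmo} go through, and so does your window-average identity for $\hm V$.
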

\subsection{Harmonic control design}\label{control_design}
We now extend the stability result of Theorem~\ref{main} to control design. To this end, we introduce an augmented system representation that incorporates the controller dynamics while preserving the AFM-LPP structure. As a result, the vertex-based Lyapunov conditions of Theorem~\ref{main} remain applicable to the closed-loop system.
The guiding idea is to incorporate integral action and harmonic cancellation systematically while remaining fully compatible with the Toeplitz operator structure. To this end, we introduce a forwarding variable $z$, whose
dynamics implement an internal-model principle:
\[
\dot{z}(t)=\omega(t)\big(J(\theta(t))z(t)+L(\theta(t))C(\theta(t))x(t)\big),
\]
where $J,L,C$ are given $2\pi$–periodic phase functions. In general, the matrix $C$ defines the regulated outputs, while
$J$ and $L$ are design choices selected to satisfy specific
performance objectives. For instance, choosing $J = 0$ and
$L = I$ yields a pure integral action, ensuring steady–state
tracking of the desired outputs. In the application section, we
illustrate how appropriate selections of $J$, $L$, and $C$
can be used to selectively target and attenuate specific
harmonic components. 
In harmonic variables, we have
\begin{equation}\label{eq:integral_state_harmo}
\dot Z(t)=\omega(t)\left((\hm{J}-\hm{N})Z(t)+\hm{LC}X(t)\right)\end{equation}
where the operators $\hm{J}=\hm{T}_{\theta}(J)$, $\hm{L}=\hm{T}_{\theta}(L)$, $\hm{C}=\hm{T}_{\theta}(C)$ are assumed constant and bounded on $\ell^2$. 
Stacking $\widetilde X=(X,Z)$ gives the augmented harmonic system
\begin{equation}\label{eq:augmented_harmonic_system0}
\dot{{\widetilde X}}(t)=\big(\widetilde{\hm{A}}(\omega(t))-\omega(t)\hm{N}\big){\widetilde X}(t)
+\widetilde{\hm{B}}(\omega(t))U(t),
\end{equation}
with 
\[
\widetilde{\hm{A}}(\omega)=
\begin{bmatrix}
\hm{G}(\omega)\hm{A}_0+\omega\hm{A}_1 & 0\\
\omega\hm{L}\hm{C} & \omega\hm{J}
\end{bmatrix},
\widetilde{\hm{B}}(\omega)=
\begin{bmatrix}
\hm{G}(\omega)\hm{B}_0+\omega\hm{B}_1\\
0
\end{bmatrix}.
\]
Let $U(t) = -\hm{K} {\widetilde X}(t) $ where $\hm{K}$ is a state-feedback operator to be designed. Applying Theorem~\ref{main} to the closed-loop operator
$\widetilde{\hm{A}}_{\mathrm{cl}}(\omega)=\big(\widetilde{\hm{A}}(\omega)-\omega\hm{N}\big)-\widetilde{\hm{B}}(\omega)\hm{K}$
yields, at each frequency vertex, a quadratic Lyapunov inequality in $\hm{P}$ and $\hm{K}$.
Introducing the standard change of variables $\hm{Y} = \hm{K}\hm{S}$, where $\hm Y$ and $\hm S=\hm P^{-1}$ are bounded Toeplitz block operators, yields the following set of linear matrix inequalities, which are convex in the decision variables: 
\begin{equation}\label{eq:LMPTh7}
(\widetilde{\mathcal{A}}_{\omega_i}-\omega_i\mathcal{N})\hm{S}
+\hm{S}{(\widetilde{\mathcal{A}}_{\omega_i}-\omega_i\mathcal{N})}^*
-\widetilde{\mathcal{B}}_{\omega_i}\mathcal{Y}
-\mathcal{Y}^*\widetilde{\mathcal{B}}_{\omega_i}^* \prec 0,
\end{equation}
and where $\widetilde{\mathcal{A}}_{\omega_i} = \widetilde{\mathcal{A}}(\omega_i)$, $\widetilde{\mathcal{B}}_{\omega_i} = \widetilde{\mathcal{B}}(\omega_i)$ with $\omega_i \in \{\omega_{\min},\omega_{\max}\}$ {and $\hm G(\omega_i)=\hm I$}. The LMIs~\eqref{eq:LMPTh7} are therefore a direct convex reformulation of the Lyapunov conditions of Theorem~\ref{main} applied to the augmented closed-loop AFM-LPP system. The feasibility of these LMIs guarantees closed-loop stability for all admissible time-varying frequencies {and provides a quadratic Lyapunov function given by $ {\widetilde X}^*{\hm P} {\widetilde X}$.} The resulting design yields a bounded Toeplitz state-feedback gain
$\hm K=\hm Y\hm S^{-1}=[\hm K_x,\hm K_z]$, admitting a time domain phase-periodic representation,
\begin{equation}
K(\theta(t))=\sum_{h\in\mathbb{Z}} K_h\,\mathrm{e}^{\mathrm{j}h\theta(t)} .\label{eqKperio}
\end{equation}
By introducing Toeplitz block weighting operators $\mathcal{R}$ and $\mathcal{Q}$, the LMIs~\eqref{eq:LMPTh7} can be interpreted in an LQR-like framework as classically done in the finite dimension LTI setting~\cite{boyd_linear_1994}. The resulting controller is a guaranteed-cost state-feedback law that ensures closed-loop stability under time-varying frequency and is obtained by solving the following optimization problem:
\begin{align}
&
 \min_{\scriptstyle \hm{S}, \hm{Y}, \mathcal{}} \mathrm{Tr}_0(\hm {M}), \text{ s.t. } \label{op2}\\
&\left(\begin{array}{ccc} \Xi(\omega_i)& \star & \star \\ \mathcal{R}^{\frac{1}{2}}\mathcal{Y} & -\mathcal{I} & \star \\\mathcal{Q}^{\frac{1}{2}}\hm{S}& 0 & -\mathcal{I}\end{array}\right)\preceq 0 \nonumber
\quad
\left(\begin{array}{cc}\hm {M}& \mathcal{I} \\\mathcal{I} & \hm{S}\end{array}\right)\succ0 \nonumber
\end{align}
with
\[
\Xi(\omega_i)=(\widetilde{\mathcal{A}}_{\omega_i}-\omega_i\mathcal{N})\hm{S}+\hm{S}{(\widetilde {\mathcal{A}}_{\omega_i}-\omega_i\mathcal{N})}^* -\widetilde{\mathcal{B}}_{\omega_i}\mathcal{Y}-\mathcal{Y}^*\widetilde{\mathcal{B}}_{\omega_i}^* 
\]
where $\Tr_{0}$ is the average-trace operator defined in~\cite{vernerey_tblmi_2025}, and $\hm {M}$ is a Toeplitz block operator bounded on $\ell^{2}$. 
\begin{remark}
 The stability and synthesis results established in this section are exact at the infinite-dimensional harmonic level and do not rely on frequency-rate bounds or operator approximation. In practice, numerical implementation requires finite-dimensional truncations of the Toeplitz operators, for which feasibility-preserving convergence properties are well known and do not alter the theoretical guarantees~\cite{vernerey_tblmi_2025}. The open-source \emph{PhasorArray Toolbox}~\cite{PhasorArray25,ECC26} automates the construction of truncated Toeplitz operators and provides routines for harmonic Lyapunov, Riccati, and LMI problems, making implementation straightforward. 
\end{remark}
\subsection{Classical harmonic controllers as special cases}
The harmonic-domain controller structure introduced above is designed to preserve the AFM-LPP property of the closed-loop system and thereby admit an exact Lyapunov-based analysis under variable-frequency conditions. In this section, it is shown that this formulation encompasses several classical harmonic control strategies, including proportional-resonant~\cite{teodorescu2006proportional, liserre_multiple_2006}, repetitive controllers~\cite{WEISS99} and modulation-based control strategies~\cite{Bodson} as special cases. This observation establishes a direct connection between the proposed approach and well-known designs, and provides further insight into its practical relevance. For instance, the ideal repetitive controller given by
$C(s) = \frac{e^{-sT_0}}{1 - e^{-sT_0}}$
possesses infinite gain at integer multiples of the fundamental 
frequency. Truncating its Fourier expansion yields a bank of resonant 
oscillators, each associated with a specific harmonic. Indeed, with $H$ the number of harmonics retained one has
$C(s) \approx \sum_{h = 1}^{H} 
\frac{K_h s}{s^2 + {(h\omega_0)}^2}$.
Each harmonic therefore introduces a resonant subsystem of the form:
\begin{equation}\label{t_osc}
\dot{z}_h =
\begin{bmatrix}
0 & h\omega_0\\
-h\omega_0 & 0
\end{bmatrix}
z_h +
\begin{bmatrix}
1\\
0
\end{bmatrix} e.
\end{equation}
Note that, under the rotating-frame change of variables $w_h = R(h\theta) z_h$, this formulation is equivalent to~\cite{Bodson}
\begin{equation}\label{eq:modulation}
\dot w_h =
\begin{bmatrix}
\cos(h\theta)\\
\sin(h\theta)
\end{bmatrix} e ,
\end{equation}
which corresponds to the integral structure used in modulation-based control schemes.
When $\omega(t)=\omega_0(t)$ is time-varying and the system dynamics are phase-periodic, the situation differs, as the targeted harmonics are defined in the phase domain rather than in time, leading to index-based harmonic resonance matrix:
\begin{equation}\label{t_osc_en_phase}
\pddphi{z_\theta}_{h}(\varphi) =
\begin{bmatrix}
0 & h\\
-h & 0
\end{bmatrix}
z_\theta (\varphi)+
\begin{bmatrix}
1\\
0
\end{bmatrix} e_\theta(\varphi).
\end{equation}
Consequently, within the proposed framework, the corresponding phase-domain realization is:
\begin{equation}\label{os_H}
\dot{Z}_h (t)=\omega(t)\left(
\begin{bmatrix}
-\hm{N} & h\hm{I}\\
-h\hm{I} & -\hm{N}
\end{bmatrix}
Z_h(t) +
\begin{bmatrix}
\hm{I}\\
0
\end{bmatrix} E(t)\right),
\end{equation}
which is directly compatible with the forwarding structure previously 
introduced. Stacking these oscillators for all selected harmonics 
produces a harmonic predictor that can be tuned via the same 
methodology used for the main controller. This viewpoint clarifies that PR and repetitive controllers are not 
separate design philosophies, but rather structured instances of the 
general internal-model construction enabled by the AFM-LPP harmonic 
representation. Moreover, it extends their applicability to systems 
with time-varying frequency.
\section{Application to Variable-Speed Motor Drives}\label{sec:pmsm_application}
To illustrate the relevance of the proposed AFM-LPP framework, we apply it to a permanent magnet synchronous motor (PMSM) operating under variable-speed conditions.  We show that the harmonic representation of the PMSM naturally fits into the AFM-LPP class introduced in Section~\ref{sec:robust_synthesis_LPP}, which enables the direct application of the exact stability and synthesis results developed in that section. This allows the construction of harmonic controllers ensuring tracking, harmonic mitigation, and closed-loop stability over a prescribed speed range. Throughout the remainder of this section, we assume that rotor speed $\omega_m(t)$ is \emph{measurable}. 

\subsection{PMSM as a phase-periodic system}
Consider a surface-mounted PMSM with $p$ pole pairs.
Let the state vector be $x=(i_{abc},\omega_m)$, where
$i_{abc}={(i_a,i_b,i_c)}^\top$ denotes the stator phase currents and
$\omega_m$ the rotor mechanical speed.
The control input is the applied stator voltage
$u=v_{abc}$, and the external disturbance is the load torque
$w=\Gamma_L$.
The PMSM dynamics can then be written in state-space form as
\begin{equation}\label{eq:pmsm_state_space}
\dot{x}(t) = A(\theta(t)) x(t) + B_u u(t) + B_w w(t),
\end{equation}
where the system matrices are given by
\begin{equation*}\label{eq:pmsm_matrices_tutorial}
\begin{aligned}
A(\theta) &= 
\begin{bmatrix}
-\frac{r}{L} I_3 & \frac{p\psi_f }{L} \Phi_{abc}(p\theta) \\[0.5em]
-\frac{p\psi_f}{J} {\Phi_{abc}(p\theta)}^\top & -\frac{B_f}{J}
\end{bmatrix}, \\[0.5em]
B_u &= 
\begin{bmatrix}
\frac{1}{L} I_3 \\[0.5em]
0_{1 \times 3}
\end{bmatrix}, \quad
B_w = 
\begin{bmatrix}
0_{3 \times 1} \\[0.5em]
-\frac{1}{J}
\end{bmatrix}.
\end{aligned}
\end{equation*}
The machine parameters are the stator phase inductance $L$,
phase resistance $r$, rotor inertia $J$,
viscous friction coefficient $B_f$,
and permanent magnet flux linkage $\psi_f$.
The vector $\Phi_{abc}(p\theta)$ represents the back-EMF waveform,
which is $2\pi$-periodic in the electrical angle $p\theta$.
For an ideal surface-mounted PMSM, this waveform is purely sinusoidal and given by
\begin{equation}\label{eq:back_emf}
\Phi_{abc}(p\theta) = \begin{bmatrix}
\sin(p\theta) \\
\sin(p\theta - 2\pi/3) \\
\sin(p\theta + 2\pi/3)
\end{bmatrix}
\end{equation}
 {The state matrix $A(\theta)$ depends periodically on the electrical angle through $\Phi_{abc}(p\theta)$, which makes the PMSM dynamics {phase-periodic}. Consequently, the PMSM provides a canonical example of an AFM-LPP system and is therefore directly compatible with the harmonic modeling and control framework developed in Section~\ref{sec:robust_synthesis_LPP}. Applying the variable-frequency Fourier decomposition $\hm{F}_{\theta}$ to the state-space model~\eqref{eq:pmsm_state_space}, where the phase variable $\theta$ satisfies $\dot{\theta}=\omega$ with $\omega$ a time-varying parameter, leads to the following AFM-LPP harmonic representation which matches the exact harmonic PMSM model when $\omega=\omega_m$:}


\begin{equation}\label{eq:pmsm_harmonic_lpv}
\begin{aligned}
\dot{X}(t) =& \big(\hm{G}(\omega(t))\hm{A} - \omega(t)\hm{N}\big)X(t) \\&+ \hm{G}(\omega(t))(\hm{B}_u U(t) + \hm{B}_w W(t))
\end{aligned}
\end{equation}
where $X = \hm{F}_{\theta}(x) = (I_{abc}, \Omega)$, 
$U = V_{abc}=\hm{F}_{\theta}(u)$ and $W = \hm{F}_{\theta}(w)$ 
and where the constant Toeplitz-block operators $\hm{A}$, $\hm{B}_u$, and $\hm{B}_w$ are given by:
\begin{subequations}\label{eq:harmonic_operators_pmsm}
\begin{align}
&\hm{A} = \begin{bmatrix}
-\frac{r}{L} \hm{I}_3 & \frac{\psi_f p}{L} \hm{T}_{\theta}(\Phi_\mathrm{abc}) \\[0.5em]
-\frac{p\psi_f}{J} {\hm{T}_{\theta}(\Phi_\mathrm{abc})}^* & -\frac{B_f}{J}\hm{I}_1
\end{bmatrix}, \\ \qquad
&\qquad\hm{B}_u = \begin{bmatrix}
\frac{1}{L}\hm{I}_3 \\[0.5em]
0_{1 \times 3}
\end{bmatrix}, \qquad
\hm{B}_w = \begin{bmatrix}
0_{3 \times 1} \\[0.5em]
-\frac{1}{J}\hm{I}_1
\end{bmatrix}
\end{align}
\end{subequations}
where $\hm{I}_n = \hm{T}_{\theta}(\idn)$, and  $\hm{T}_{\theta}(\Phi_\mathrm{abc})$ represents the back-EMF harmonics. For the ideal sinusoidal back-EMF~\eqref{eq:back_emf}, the Toeplitz operators $\hm{T}_{\theta}(\Phi_{abc})$ exhibit a sparse structure. 
For example, since
$\Phi_a(\theta)
 = \sin(p\theta)
 = \tfrac{1}{2\ji}\bigl(e^{j p\theta} - e^{-j p\theta}\bigr)$,
the Toeplitz matrix $\hm{T}_{\theta}(\Phi_a)$ has only two nonzero
diagonals, at offsets $\pm p$, and each of their diagonal elements are equal to
$\pm\tfrac{ \ji}{2}$. The harmonic model~\eqref{eq:pmsm_harmonic_lpv} is affine in the frequency $\omega$ and Toeplitz structured, and therefore satisfies the AFM-LPP assumptions required by the analysis and synthesis results of Section~\ref{control_design}.
\subsection{Control objectives and synthesis}
The control design aims to enforce phase-periodic steady-state behavior and harmonic mitigation, while preserving stability under variable-frequency operation. In particular, the following objectives are considered:
\begin{enumerate}
 \item \emph{Speed regulation:} accurate tracking of the mechanical speed $x_4(t)=\omega_m(t)$ to a given reference $\omega_{m,0}^{\mathrm{ref}}$ in the presence of load torque disturbances $\Gamma_L$. 
 \item \emph{Power factor optimization:} regulation of the $d$-axis current to zero (or to a prescribed reference) in order to maximize efficiency.
 \item \textit{Harmonic mitigation:} rejection of phase-periodic current harmonics induced by torque ripple at $2\omega_m$, which generates current harmonics in $i_{abc}$ of order $p \pm 2k$, $k = 1,2,\ldots$.
\end{enumerate}
These objectives are addressed by enforcing zero steady-state values of selected phase-domain output components using integral action in the harmonic domain. Speed regulation is achieved by integrating the error between the DC component $\Omega_0(t)$ of the mechanical speed phasor and the reference $\omega_{m,0}^{\mathrm{ref}}$. 

Power factor optimization is achieved by regulating the DC component of the $d$-axis current $i_d$ in the rotating $dq$ frame where $i_{dq}=T_p(\theta)(i_{abc})$ and $v_{dq}=T_p(\theta)(v_{abc})$ and where the generalized Park transformation~\cite{MIRHO} is provided by (here, for $k=p$)
\begin{align} 
i_{\mathrm{dq}}^k &= T_k(\theta) i_{abc} \label{park}\\
&= \tfrac{2}{3}
\begin{bmatrix}
\cos(k\theta) & \cos(k\theta - \tfrac{2\pi}{3}) & \cos(k\theta + \tfrac{2\pi}{3}) \\
-\sin(k\theta) & -\sin(k\theta - \tfrac{2\pi}{3}) & -\sin(k\theta + \tfrac{2\pi}{3})
\end{bmatrix} i_{abc}.\nonumber
\end{align}
This corresponds to enforcing $I_{d,0}=0$, where $I_d=\hm F_{\theta}(i_d)$ is the harmonic representation of the $d$-axis current.

Harmonic mitigation is implemented by exploiting generalized Park transformations $T_k(\theta)$ for $k\neq p$
which map the $k$-th harmonic components of the three-phase currents into DC signals in the rotating frame. Integral action on these quantities enforces rejection of the corresponding harmonics in phase-periodic steady state. The resulting control output operator is defined as
\begin{equation}\label{eq:output_matrix_time}
C(\theta)=
\begin{bmatrix}
0 & 1 \\
[1\ 0]T_p(\theta) & 0 \\
\mathrm{stack}_{k=0,2,6,8}\!\left(T_k(\theta)\ \ 0\right)
\end{bmatrix},
\end{equation}
which selects the relevant speed, $d$-axis, and harmonic components to be regulated.

In the harmonic-domain framework, tracking of phase-periodic operating trajectories requires shifting the origin to the harmonic equilibrium associated with the reference speed and disturbance. The phase-periodic steady state dynamics can be characterized from the definition of an equilibrium in phase domain:
\begin{definition} In phase-domain, an equilibrium $(X^{\mathrm{ref}},U^{\mathrm{ref}})$ for a phase-periodic disturbance $W$ satisfies the non-linear equation: 
\begin{equation}\label{eq:harmonic_equilibrium1}
0=\big(\hm A-\hm T_\theta(\omega_m^{\mathrm{ref}})\hm N\big)X^{\mathrm{ref}}
+\hm B_u U^{\mathrm{ref}}+\hm B_u W,
\end{equation}
where the phase frequency $\omega_m^{\mathrm{ref}}$ corresponds to the speed rotor reference.
\end{definition}
Assuming a known phase-periodic disturbance $W$ and the reference speed trajectory $\omega_{m}^{\mathrm{ref}}(t)$ takes values in the interval $[\omega_{\min},\ \omega_{\max}]$, the full derivation of
the equilibrium $(X^{\mathrm{ref}},U^{\mathrm{ref}})=\hm F_{\theta}(x^{\mathrm{ref}},u^{\mathrm{ref}})$ is given in Appendix.
Note that, when the phase function $\omega$ differs from the reference $\omega_m^{\mathrm{ref}}$, the corresponding equilibrium quantities evolve according to the harmonic dynamics associated with the actual phase trajectory
\begin{equation}\label{eq:harmonic_equilibrium}
  \begin{split}
      \dot X^{\mathrm{ref}}(t)
    =&\big(\hm G(\omega(t))\hm A-\omega(t)\hm N\big)X^{\mathrm{ref}}(t)\\
    &+\hm G(\omega(t))\big(\hm B_u U^{\mathrm{ref}}(t) +\hm B_w W\big).
  \end{split}
\end{equation}
We now rewrite the system around the phase-periodic equilibrium in order to synthesize a stabilizing controller for the deviation dynamics. Let the state and input errors be given by
$E=X-X^{\mathrm{ref}}$ and $\widetilde U=U-U^{\mathrm{ref}}$,
where $(X^{\mathrm{ref}},U^{\mathrm{ref}})$ satisfies the harmonic equilibrium
equation~\eqref{eq:harmonic_equilibrium}.
The resulting error dynamics are given by
\begin{align*}
 \dot E(t)=(&\hm{G}(\omega(t))\hm A -\omega(t)\hm N)E(t)+\hm{G}(\omega(t))\hm B_u \widetilde U(t)
\end{align*}
Using the control output as defined by~\eqref{eq:output_matrix_time}, the harmonic forwarding dynamics are introduced according to~\eqref{eq:integral_state_harmo} by choosing $\hm J=0$ and $\hm L=\hm I$, yielding
\begin{equation}\label{eq:integral_dynamics_harmonic}
\dot{Z}(t)=\omega(t)\big(-\hm N Z(t)+\hm CE(t)\big).
\end{equation}
The corresponding harmonic-augmented system then reads with ${\widetilde E}=(E,Z)$
\begin{equation}\label{eq:augmented_harmonic_system}
\begin{aligned}
\dot{{\widetilde E}}(t)=&
\big(\widetilde{\hm A}(\omega(t))-\omega(t)\hm N\big){\widetilde E}(t)
+\widetilde{\hm B}_v(\omega(t)) \widetilde U(t)
\end{aligned}
\end{equation}
where 
\begin{align}\label{eq:augmented_matrices}
&\widetilde{\hm A}(\omega) =
\begin{bmatrix}
\hm{G}(\omega)\hm A & 0 \\
\omega\hm C & 0
\end{bmatrix}, \quad 
\widetilde{\hm B}_v(\omega) =
\begin{bmatrix}
\hm{G}(\omega)\hm B_u \\ 0
\end{bmatrix}.
\end{align}
  
The augmented deviation system~\eqref{eq:augmented_harmonic_system} is an AFM-LPP system. The synthesis results of Section~\ref{sec:robust_synthesis_LPP} therefore yield a Toeplitz-block feedback gain $\hm K=[\hm K_x\ \hm K_z]$ obtained by solving the harmonic synthesis problem~\eqref{op2} and using the reconstruction procedure~\eqref{recons_timedomain}. The corresponding time-domain control law is
\begin{equation}
  \begin{split}
    u(t)=&\;K_x(\theta(t))\big(x(t)-x^{\mathrm{ref}}(t)\big) \\
    &+K_z(\theta(t))z(t)+u^{\mathrm{ref}}(t).
  \end{split}
\end{equation} 
This controller guarantees asymptotic convergence $E(t)=X(t)-X^{\mathrm{ref}}(t)\to0$ provided that $\omega \in [\omega_{\min},\omega_{\max}]$. However, for the PMSM, $\omega$ is not a free variable but is constrained by the mechanical speed, i.e., $\omega=\omega_m$ and $\theta=\theta_m$. Therefore, the parameter $\omega$  coincides with a state variable, and the stability guarantee holds as long as the state trajectory remains inside the admissible domain $[\omega_{\min},\omega_{\max}]$. 
Proposition~\ref{prop1_appendix}, given and proved in the appendix, provides a sufficient condition ensuring this property.
This result ensures that the frequency admissibility condition required by the AFM-LPP stability theorem is preserved along closed-loop trajectories. It is obtained by determining the largest quadratic Lyapunov level set contained inside the admissible domain. The admissible level is uniquely determined by the minimum distance between the DC component $\omega_{m,0}^{\mathrm{ref}}$ of the rotor speed reference and the bounds $\omega_{\min}$ and $\omega_{\max}$. More specifically, let the Lyapunov function
$ \hm V({\widetilde E}) = {\widetilde E}^{*}\, \hm P {\widetilde E}$,
where ${\widetilde E}$ is the augmented state defined in~\eqref{eq:augmented_harmonic_system} and $\hm P=\hm S^{-1}$, where $\hm S$ is the solution of~\eqref{op2}. Let $S$ be the representative of $\hm S$ in time domain and denote by $s_{44}$ the fourth diagonal entry of $S$ and by ${S}_{44,0}$ the zero-order phasor of ${S}_{44}=\hm F_\theta(s_{44})$.
For a given step reference $\omega_{m,0}^{\mathrm{ref}}(t_0)$, if the initial augmented state ${\widetilde E}(t_0)$ satisfies
\begin{equation}
\hm V({\widetilde E}(t_0)) < L_{\max}(\omega_{m,0}^{\mathrm{ref}}(t_0)),\label{decision0}
\end{equation}
where
\[
L_{\max}(\omega_{m,0}^{\mathrm{ref}})
=
\frac{\delta^2(\omega_{m,0}^{\mathrm{ref}})}{ S_{44,0}},
\]
with 
$$\delta(\omega_{m,0}^{\mathrm{ref}})
=
\min\big(
|\omega_{\min}-\omega_{m,0}^{\mathrm{ref}}|,
|\omega_{\max}-\omega_{m,0}^{\mathrm{ref}}|
\big), $$
then the closed-loop trajectory satisfies
\[
E(t)\to0
\quad\text{and}\quad
\omega_m(t)\in[\omega_{\min},\omega_{\max}]
\]
for all $t\ge t_0$. Moreover, any subsequent step change of the reference $\omega_{m,0}^{\mathrm{ref}}(t_k)$ $k=1,2,\cdots $ preserving~\eqref{decision0} is admissible and does not affect stability.
\color{black}
\subsection{Simulation results}\label{sec:simulation_results}
To validate the proposed harmonic control framework, numerical simulations are performed on the PMSM operating under variable-speed conditions and phase-periodic load torque disturbances.
The harmonic synthesis problem~\eqref{op2} is solved using the weighting matrices
$Q=\hm I$ and $R=100\hm I$. The operating speed range is chosen as
$[\omega_{\min},\omega_{\max}]=[10,200]$~rad/s (approximately 100–2000~rpm), representative of low-to-medium speed operation.
Using a $10$th-order truncation scheme, the computation time required to obtain a sufficiently accurate solution is $114$~s. The PMSM parameters used in simulation are reported in Table~\ref{tab:pmsm_typical_params}.
\begin{table}[htbp]
\centering
\caption{PMSM parameters}\label{tab:pmsm_typical_params}
\begin{tabular}{l c c}
\toprule
\textbf{Parameter} & \textbf{Symbol} & \textbf{Value} \\
\midrule
Stator resistance  & $r$  & $0.5~\Omega$ \\
Phase inductance   & $L$  & $1.5~\mathrm{mH}$ \\
Flux linkage    & $\psi_f$ & $0.14~\mathrm{Wb}$ \\
Rotor inertia   & $J$  & $0.03~\mathrm{kg\,m^2}$ \\
Viscous friction coefficient & $B_f$  & $0.02~\mathrm{N\,m\,s/rad}$ \\
Number of pole pairs  & $p$  & $4$ \\
\bottomrule
\end{tabular}
\end{table}
\begin{figure*}[t]
 \centering
 \begin{minipage}{0.45\textwidth}
  \centering
 \includegraphics[width=\linewidth]{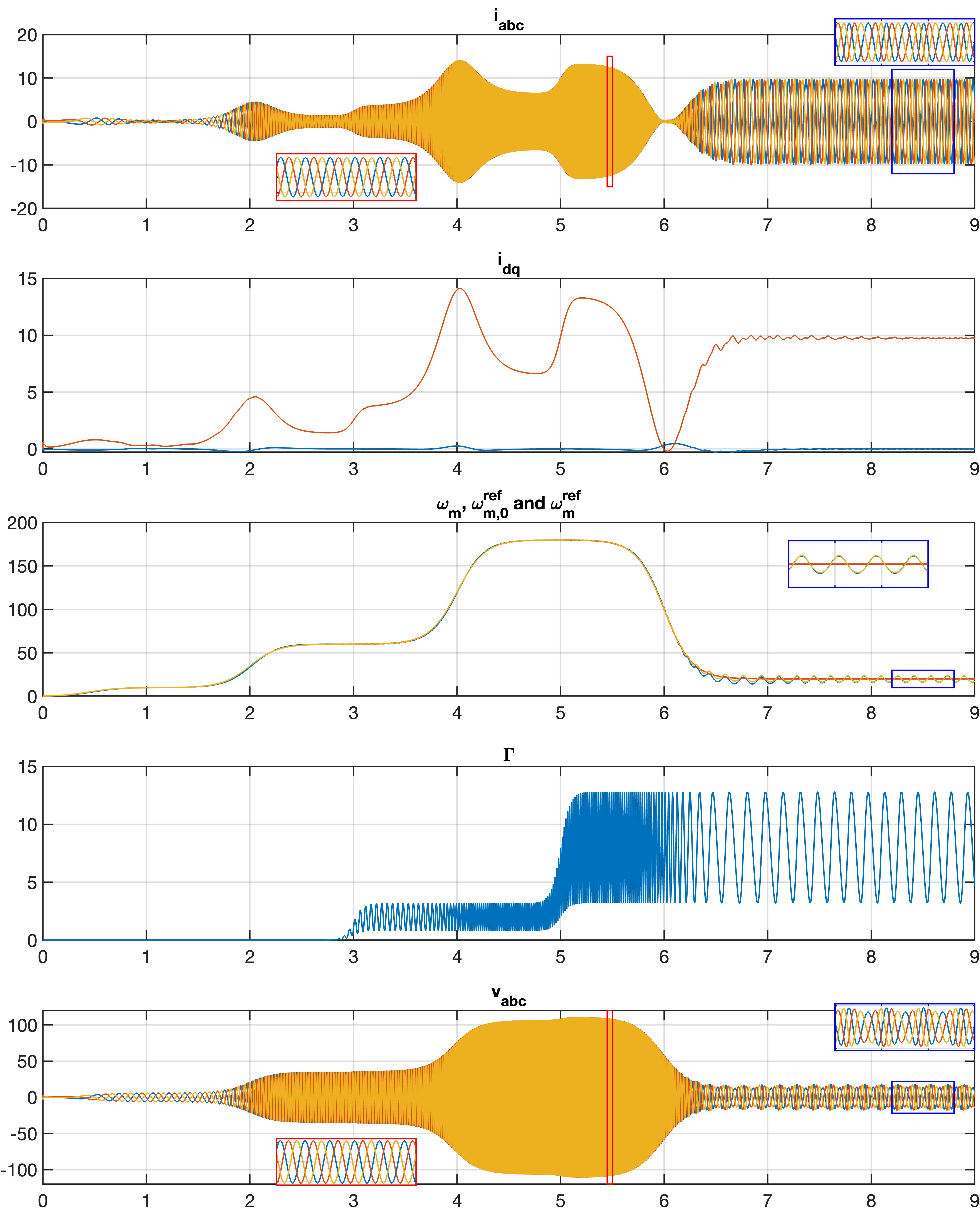}
 \caption{Phase currents $i_\mathrm{abc}$, current $i_{dq}$, rotor speed $\omega_m$ (blue), $\omega_m^{\mathrm{ref}}$ (yellow) and $\omega_{m,0}^{\mathrm{ref}}$ (red), load torque $\Gamma_L$ and the applied control $u=v_\mathrm{abc}$.}\label{fig:simul}
 \end{minipage}\hfill
 \begin{minipage}{0.45\textwidth}
   \centering
 \includegraphics[width=\linewidth]{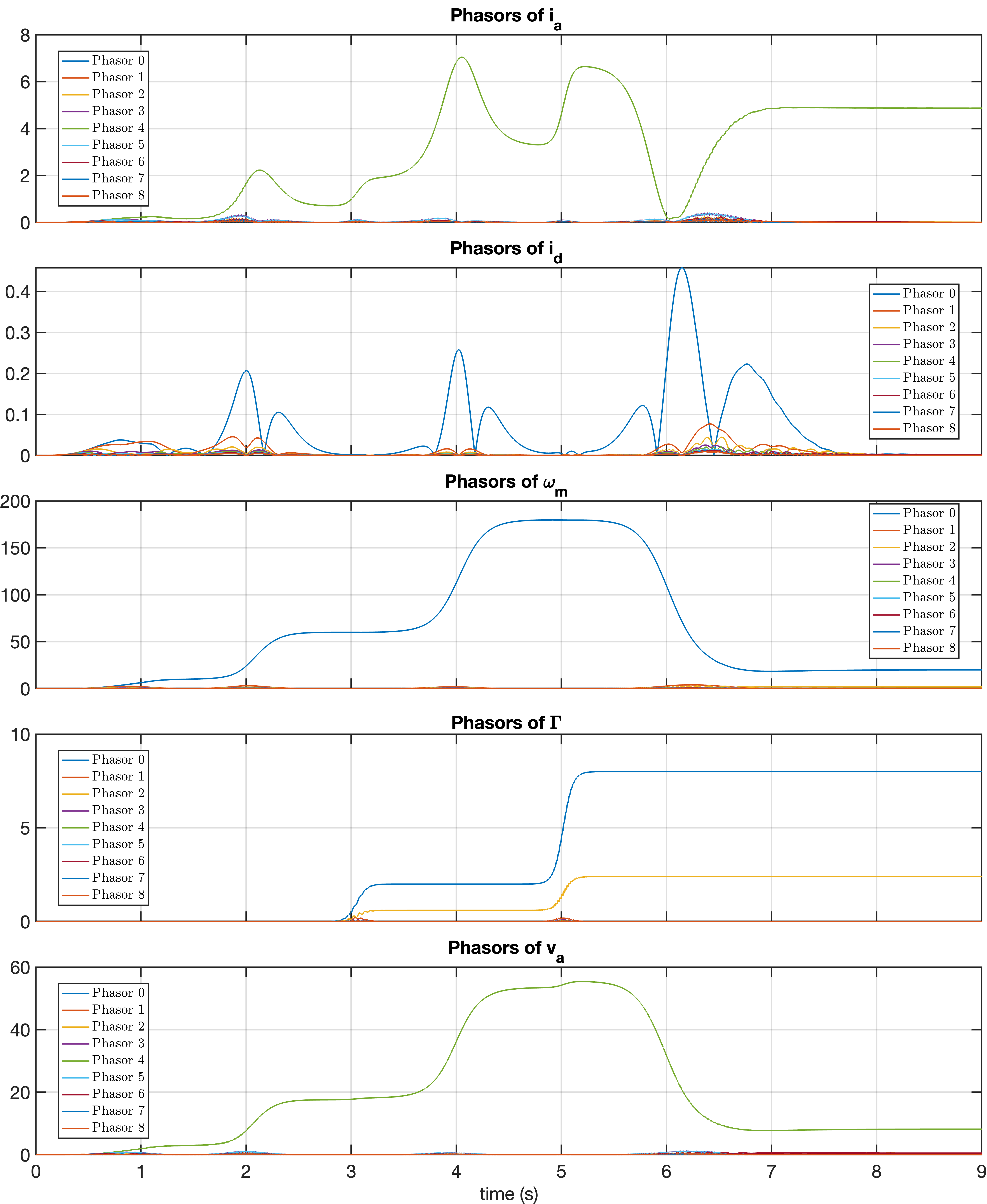}
 \caption{ {Magnitude of harmonics illustrating targeted suppression in $i_a$ and $i_d$. Top to bottom: $ I_{a,k}$, $I_{d,k}$, $\Omega_{m,k}$ $\Gamma_k$ and $V_{a,k}$, $k=0,1,\ldots,8$.}}\label{fig:simulharmo}
 \end{minipage}
\end{figure*}
\begin{figure*}[t]
\begin{minipage}{0.45\textwidth}
  \centering
 \includegraphics[width=\linewidth]{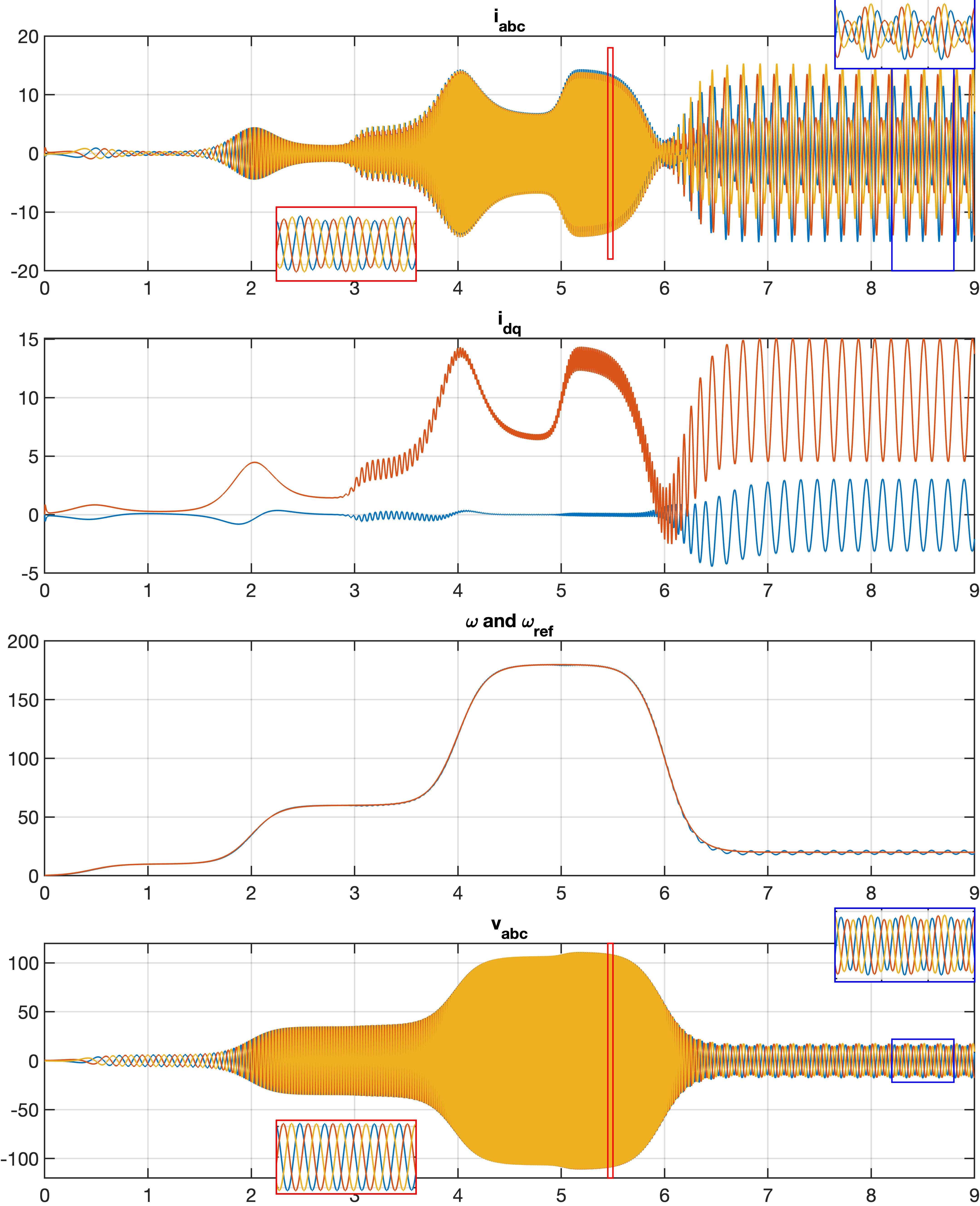}
 \caption{Simulation results without harmonic mitigation: the phase currents $ i_{\mathrm{abc}} $, the $dq$ currents $ i_{dq} $, the rotor speed $ \omega_m $ and the applied control $v_{abc}$, for the same load torque and reference speed profile as in the previous case.}\label{fig:simulsans}
 \end{minipage}\hfill
 \begin{minipage}{0.45\textwidth}
  \centering
 \includegraphics[width=\linewidth]{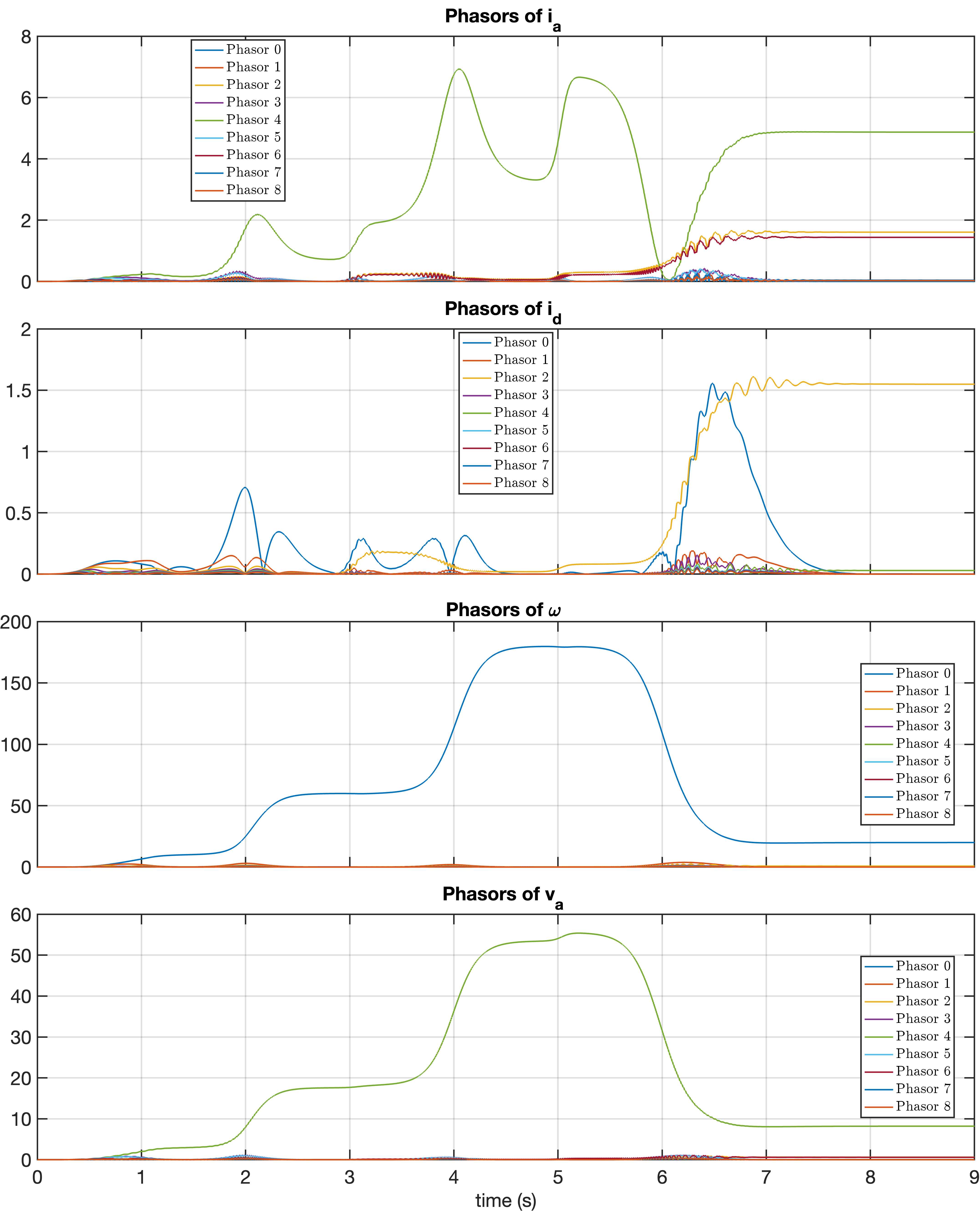}
 \caption{ {Magnitude of harmonics without harmonic mitigation objective. Top to bottom: $ I_{a,k}$, $I_{d,k}$, $\Omega_{m,k}$ and $V_{a,k}$, $k=0,1,\ldots,8$.}}\label{fig:simulharmosans}
 \end{minipage}
\end{figure*}

Figure~\ref{fig:simul} shows the closed-loop time-domain responses obtained with the synthesized harmonic controller, including the phase currents $i_{abc}$, $dq$-currents, rotor speed $\omega_m$, load torque $\Gamma_L$ and the applied control voltages $v_{abc}$. The corresponding harmonic spectra are displayed in Figure~\ref{fig:simulharmo}. For comparison, Figures~\ref{fig:simulsans} and~\ref{fig:simulharmosans} report the system response obtained when harmonic mitigation is not included in the control objectives, while keeping identical speed references and torque disturbances.
The simulation results highlight the following key properties:
\begin{itemize}
 \item \emph{Speed regulation:} The rotor speed $\omega_m$ accurately tracks its time-varying reference $\omega_{m}^{\mathrm{ref}}$, defined in~\eqref{wref} (Appendix) and computed from $\omega_{m,0}^{\mathrm{ref}}$, over the entire operating range, as illustrated in the zoom of Figure~\ref{fig:simul}. This supports the mechanical filtering assumption underlying its construction.
 \item \emph{$d$-axis current regulation:} the mean value of the direct-axis current $i_d$ is maintained close to zero, ensuring power factor optimization.
 \item \emph{Harmonic mitigation:} when harmonic rejection is enabled, the steady-state phase currents $i_{abc}$ exhibit no harmonic components other than the fundamental $p$-th harmonic, independently of the reference speed, as shown in Figure~\ref{fig:simulharmo}. In particular, this implies $i_d = 0$ and $i_q = \mathrm{const}$ in phase-periodic steady state. 
\end{itemize}
 Figure~\ref{fig:simul} includes two zoomed views: one at quasi-constant rotor speed and one under time-varying speed. These plots highlight that phase-periodicity with respect to the electrical angle $\theta$ should not be confused with periodicity in the time domain, as the latter is generally non-sinusoidal under variable-speed operation (see the zoomed area around $t=8$~s).
 {Finally, the level-set guarantee is illustrated in Figure~\ref{fig:levelset}, which compares the theoretical maximum admissible Lyapunov level $L_{\max}$ with the closed-loop Lyapunov trajectory. The results show that the Lyapunov function remains strictly below the admissible bound during reference step changes, confirming closed-loop stability is preserved.
}
 \begin{figure}
  \centering
  \includegraphics[width=0.9\linewidth]{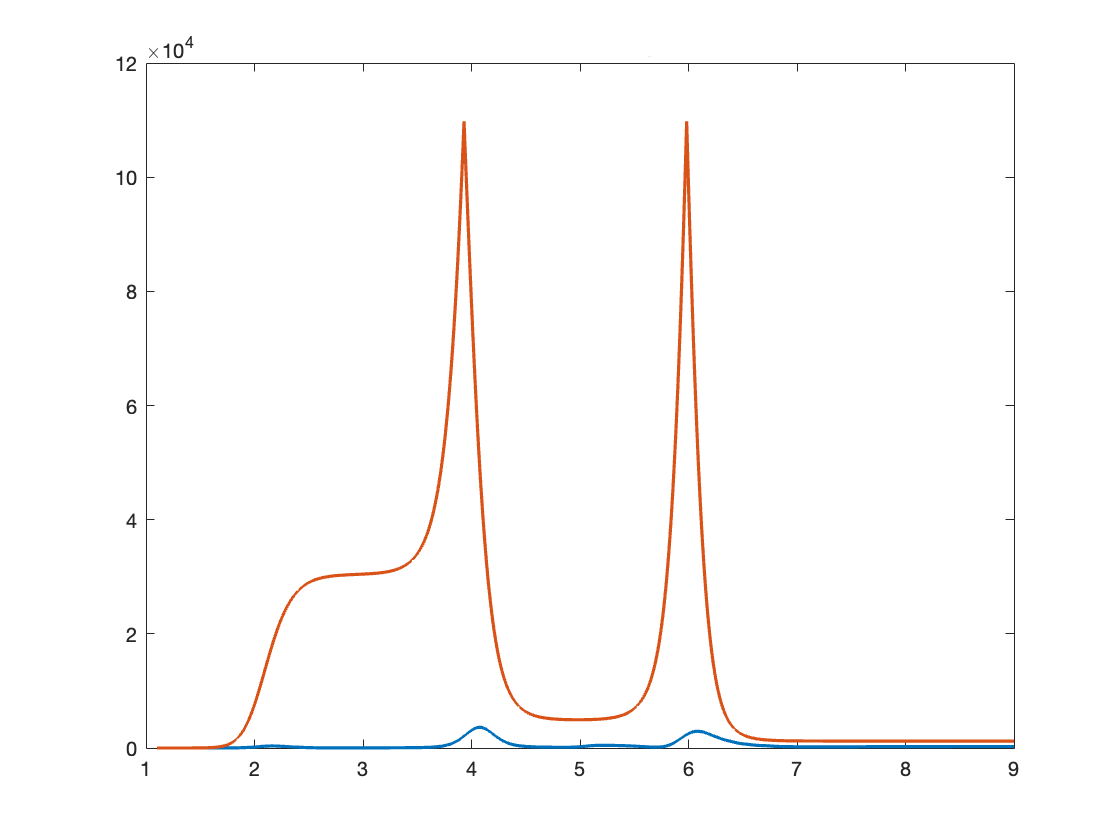}
  \caption{Admissible level $L_{\max}$ (red) and Lyapunov function $\hm V({\widetilde E})$ (blue) for the reference $\omega_{m,0}^{\mathrm{ref}}$ of Fig.~\ref{fig:simul}.}\label{fig:levelset}
 \end{figure}
\section*{Conclusion}\label{sec:conclusion}
\vspace{-.3cm}
This paper establishes a unified and rigorous framework for harmonic modeling and control of systems under variable-frequency. An exact harmonic-domain formulation is derived, providing a principled basis for analysis and systematic approximation, together with a quantitative validity criterion that characterizes the modeling error induced by frequency variations.
For linear phase-periodic systems with affine dependence on the instantaneous frequency, stability and state-feedback synthesis are shown to be certified exactly by enforcing Lyapunov inequalities at a finite number of frequency vertices. This result bridges harmonic modeling and LMI-based control design, enabling tractable certification over continuous frequency intervals.
The approach is illustrated through harmonic mitigation in variable-speed electrical drives, demonstrating both practical relevance and effectiveness for systems operating under nonstationary frequency conditions. Beyond electrical applications, the proposed framework provides a general harmonic-domain methodology for stability analysis and control of phase-periodic systems with time-varying frequency.

\section{Appendix}
\vspace{-.3cm}
This appendix provides technical details supporting the application section. 
First, the phase–periodic equilibrium of the PMSM is derived in the harmonic domain. 
Then, the maximum admissible Lyapunov level ensuring frequency admissibility is characterized.

\subsection{PMSM Phase-periodic equilibrium}
\vspace{-.3cm}
Given a known phase-periodic disturbance $W$, the harmonic equilibrium 
$(X^{\mathrm{ref}},U^{\mathrm{ref}})=\hm F_{\theta}(x^{\mathrm{ref}},u^{\mathrm{ref}})$ 
is obtained from the $dq$-frame PMSM model derived using the Park transform $T_p$ defined in~\eqref{park}. 
The corresponding phase-periodic harmonic model is given by
\begin{align*}\label{eq:pmsm_dq_harmonic}
 L\dot I_{d}=&\hm{G}(\omega(t))(-rI_d+V_d)+\omega(t)L(p I_q- \hm{N}I_d)\\
L\dot I_{q}=&\hm{G}(\omega(t))\big(-rI_q+V_q\big)-\omega(t)(p\psi\hm I+pLI_d+L\hm{N}I_q)\\ 
J\dot\Omega=&\hm{G}(\omega(t))(\frac{3}{2}p\psi I_q-W-b_f\Omega)-\omega(t)J\hm{N}\Omega
\end{align*}
where $I_{dq}=\hm{F}_\theta(i_{dq})$ and $V_{dq}=\hm{F}_\theta(v_{dq})$ denote the harmonic representations of the $dq$ currents and voltages.

Since the physical PMSM satisfies $\omega=\omega_m$, and under phase-periodic steady-state operation (i.e., when $\Omega = \hm F_\theta(\omega_m)$ is constant), the equilibrium equation is obtained by left-multiplying by $\hm{G}^{-1}(\omega)$:
\begin{align*}
 0=&\frac{1}{L}(-rI_d+V_d)+\hm T_\theta(\omega_m)(pI_q- \hm{N}I_d)\\
0=&\frac{1}{L}(-rI_q+V_q)-\hm T_\theta(\omega_m)(\frac{p\psi}{L}\hm{I}+pI_d+\hm{N}I_q)\\ 
0=&\frac{1}{J}(\frac{3}{2}p\psi I_q-W-b_f\Omega)-\hm T_\theta(\omega_m)\hm{N}\Omega
\end{align*}
Under the selected integral actions, $\Omega_0 = \omega_{m,0}^{\mathrm{ref}}$, $I_{d,k} = 0$ for all $k \in \mathbb{Z}$, and $I_{q,k} = 0$ for all $k \neq 0$, 
the $k$-th harmonic component of the last equation with $k\neq 0$ yields the nonlinear harmonic recursion
\begin{equation}
 \Omega_k=H(\ji k\omega_{m,0}^{\mathrm{ref}})\big( \frac{1}{J}W_k+\sum_{p=-\infty,p\neq k}^{+\infty}\Omega_{k-p}\Omega_{p}(\ji p)\big)\label{rec} 
\end{equation}
where
\[
H(\ji k\omega)=\frac{-1}{\frac{b_f}{J}+\mathrm{j}k\omega}.
\]
Equation~\eqref{rec} defines a nonlinear harmonic fixed-point relation. The equilibrium can therefore be computed using a fixed-point iteration initialized from the fundamental component.
For $k=0$, using the symmetry property $\Omega_{-p}=\Omega_p^*$, the classical steady-state relation is recovered:
\[
I_{q,0}=\frac{2}{3p\psi}\big(W_0+b_f\omega_{m,0}^{\mathrm{ref}}\big),
\qquad
\Omega_0=\omega_{m,0}^{\mathrm{ref}}.
\]
In particular, if $W_k=0$ for all $k\neq0$, then $\Omega_k=0$ for all $k\neq0$. The recursion~\eqref{rec} reveals that higher-order mechanical harmonics arise from nonlinear harmonic coupling. However, due to the low-pass filtering effect of the mechanical dynamics, higher-order components are strongly attenuated, which justifies truncating the recursion in practice. As an illustration, consider a disturbance containing only a second harmonic component $W_2$. Starting from the initialization $\Omega_k=0$ for all $k\neq0$, the first iteration yields
\[
\Omega_{2}
=
H(2\mathrm{j}\omega_{m,0}^{\mathrm{ref}})\frac{
1}{J}W_2 \text{ and } \Omega_{-2}=\bar \Omega_2
\]
Higher-order components are generated through harmonic coupling, for example, at second iteration appears: 
\[
\Omega_{4}
=
H(4\mathrm{j}\omega_{m,0}^{\mathrm{ref}})\Omega_2^2(2\mathrm{j}).
\]
Because of mechanical filtering, these higher-order terms are typically negligible, leading to the approximation $\Omega_{\pm4}\approx0$. Under this approximation, the diffusion effect vanishes and the harmonic equilibrium is given by
\begin{align*}
 I_{q,0}^{\mathrm{ref}} &= \frac{2}{3p\psi_f}\bigl(W_0 + B_f \omega_{m,0}^{\mathrm{ref}}\bigr), \\
 \Omega_0^{\mathrm{ref}} &= \omega_{m,0}^{\mathrm{ref}}, \\
 \Omega_2^{\mathrm{ref}} &= \frac{1}{J} H(2 \ji \omega_{m,0}^{\mathrm{ref}})\, W_2, \\
 \Omega_k^{\mathrm{ref}} &= 0, \quad k \neq 0,\pm 2,
\end{align*}
These harmonic-domain equilibrium quantities uniquely define the corresponding phase-periodic time-domain reference trajectories:
\begin{subequations}
\begin{align}
 i_d^{\mathrm{ref}}(t) &= 0, \\
 i_q^{\mathrm{ref}}(t) &= \frac{2}{3p\psi_f}\bigl(W_0 + B_f \omega_{m,0}^{\mathrm{ref}}\bigr), \\
 \omega_m^{\mathrm{ref}}(t) &= \omega_{m,0}^{\mathrm{ref}}
 + 2 |\Omega_2| \cos\!\bigl(2\theta(t) - \varphi\bigr),\label{wref}
\end{align}
\end{subequations}
where $ \varphi = \angle \Omega_2 $.
The control reference $U^{\mathrm{ref}}$ is then obtained from the electrical equilibrium equations:
\begin{align*}
 V_d^{\mathrm{ref}} &= -\hm T_\theta(\omega_m^{\mathrm{ref}})\, pL\, I_q^{\mathrm{ref}}, \qquad 
 V_q^{\mathrm{ref}} &= r I_q^{\mathrm{ref}}
 + \hm T_\theta(\omega_m^{\mathrm{ref}})\, p\psi_f,
\end{align*}
or, equivalently, in the time domain,
\begin{subequations}
\begin{align*}
 v_d^{\mathrm{ref}}(t) &= -\omega_m^{\mathrm{ref}}(t)\, pL\, i_q^{\mathrm{ref}}(t), \\
 v_q^{\mathrm{ref}}(t) &= r\, i_q^{\mathrm{ref}}(t)
 + \omega_m^{\mathrm{ref}}(t)\, p\psi_f.
\end{align*}
\end{subequations}
Finally, the corresponding phase-periodic equilibrium $(x^{\mathrm{ref}},u^{\mathrm{ref}})$ is obtained via the inverse Park transformation,
$i_{abc}^{\mathrm{ref}}=T_p^{-1}(i_{dq}^{\mathrm{ref}})$ and
$v_{abc}^{\mathrm{ref}}=T_p^{-1}(v_{dq}^{\mathrm{ref}})$. 
\subsection{Maximum Admissible Lyapunov Level}
The following proposition provides the largest Lyapunov level set ensuring that the rotor speed trajectory remains inside the admissible interval $[\omega_{\min}, \omega_{\max}]$. Geometrically, this corresponds to determining the largest ellipsoidal level set of the Lyapunov function that is fully contained inside the admissible speed domain. Since the Lyapunov function is quadratic and radially unbounded, the maximal admissible level is necessarily attained when the speed trajectory reaches one of the admissible boundaries. The proof therefore reduces to characterizing the Lyapunov level associated with these boundary conditions.
\begin{proposition}\label{prop1_appendix}
Consider, at time $t_0$, a speed reference $\omega_{m,0}^{\mathrm{ref}}(t_0)$ and the corresponding equilibrium pair $(X^{\mathrm{ref}}, U^{\mathrm{ref}})$. Define the Lyapunov function
\[
\hm V({\widetilde E}) = {\widetilde E}^{*}\, \hm P {\widetilde E},
\]
where ${\widetilde E}$ is the augmented state defined in~\eqref{eq:augmented_harmonic_system}, and $\hm P=\hm S^{-1}$ with $\hm S$ is the solution of problem~\eqref{op2}.  
If the initial condition satisfies
\begin{equation}
\hm V({\widetilde E}(t_0)) < L_{\max}(\omega_{m,0}^{\mathrm{ref}}),\label{decision}
\end{equation}
where $L_{\max}$ denotes the largest Lyapunov level set for which the rotor speed component $x_4=\omega_m$ remains inside the admissible interval $ \omega_m(t) \in [\omega_{\min}, \, \omega_{\max}]$,
then the error state satisfies $E(t) \to 0$ as $t \to \infty$. Furthermore, any subsequent step change in the reference $\omega_{m,0}^{\mathrm{ref}}(t_k)$, $k=1,2,\ldots$, that preserves condition~\eqref{decision} is admissible and does not affect stability. The admissible level $L_{\max}$ is given by
\begin{equation}
L_{\max}(\omega_{m,0}^{\mathrm{ref}})
=
\frac{\delta^2(\omega_{m,0}^{\mathrm{ref}})}
{{S}_{44,0}},\label{eqLmax}
\end{equation}
where $\delta(\omega_{m,0}^{\mathrm{ref}})
=
\min\!\bigl(
|\omega_{\min}-\omega_{m,0}^{\mathrm{ref}}|,
|\omega_{\max}-\omega_{m,0}^{\mathrm{ref}}|
\bigr), $
and ${S}_{44,0}$ denotes the zero-order phasor of ${S}_{44} = \hm F_\theta(s_{44})$, with $s_{44}$ being the fourth diagonal entry of the time-domain realization $S$.
\end{proposition}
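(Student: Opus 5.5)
Our plan is a standard invariant-sublevel-set argument, split into a geometric step (computing $L_{\max}$) and a dynamical step (invariance plus Corollary~\ref{coroStabLyap}).

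\emph{Geometric step.} We seek the largest $c$ such that $\hm V(\widetilde E)\le c$ forces the speed deviation to satisfy $|\omega_m-\omega_{m,0}^{\mathrm{ref}}|\le\delta$. The speed deviation is the fourth component $e_4$ of the error. Its DC phasor is $E_{4,0}=e_4^{\top}\widetilde E$, with $e_4$ the corresponding selection vector in $\ell^2$. The Cauchy--Schwarz inequality in the $\hm P$-inner product gives
\[
|E_{4,0}|^2\le (e_4^\top\hm P^{-1}e_4)\,\widetilde E^*\hm P\widetilde E = (e_4^\top\hm S e_4)\,\hm V(\widetilde E),
\]
and $e_4^\top\hm S e_4=S_{44,0}$ because the diagonal entries of the Toeplitz operator $\hm T_\theta(s_{44})$ equal its mean. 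Equality is attained along the direction $\hm S e_4$, which proves maximality and yields $L_{\max}=\delta^2/S_{44,0}$.

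The pointwise value is controlled in the same way. In phase-periodic form, the reconstruction~\eqref{recons_timedomain} and the realization of $\hm P$ by $P(\theta)$ turn $\hm V$ into an average of $e(\theta)^\top P(\theta)e(\theta)$. Applying the finite-dimensional Cauchy--Schwarz inequality to $P(\theta)$ then bounds $e_4(\theta)^2\le s_{44}(\theta)\,(\cdot)$. Here we use that $\omega_m^{\mathrm{ref}}$ carries the small ripple $\Omega_2$, which the construction of $\delta$ around $\omega_{m,0}^{\mathrm{ref}}$ neglects.

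\emph{Invariance and convergence.} Let $t^\star$ be the first exit time of $\omega_m$ from $[\omega_{\min},\omega_{\max}]$. On $[t_0,t^\star)$ the frequency is admissible, so Theorem~\ref{main} applied to the augmented closed-loop operator through the LMIs~\eqref{op2} gives $\dot{\hm V}<0$. Hence $\hm V(\widetilde E(t))<L_{\max}$ on that interval, and the geometric step keeps $\omega_m$ strictly inside the interval. By continuity this contradicts the existence of $t^\star$, so $t^\star=\infty$. Corollary~\ref{coroStabLyap} then yields $E(t)\to0$. For a reference step at $t_k$, the equilibrium and therefore $\widetilde E$ jump. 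If~\eqref{decision} holds for the new reference at $t_k$, the same argument restarts, and the conclusion follows by induction over the steps.

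\emph{Main obstacle.} The key difficulty is the passage from the harmonic level bound to a pointwise-in-time bound on $\omega_m(t)$, since $\hm V$ is an $\ell^2$ quantity over a sliding window. I would handle this by working in the phase domain via the differential inequality~\eqref{angleLyap}: $v=x^\top P(\theta)x$ decreases and bounds $e_4^2$ pointwise through $s_{44}(\theta)$. I would then argue, or assume as the paper implicitly does, that $S_{44,0}$ is the relevant constant, for instance by restricting attention to the DC component as stated.
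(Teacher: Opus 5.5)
\textbf{There is a genuine gap.} Your geometric step is the paper's computation in another form. The Cauchy--Schwarz bound $|E_{4,0}|^2\le S_{44,0}\,\hm{V}(\widetilde E)$, tight along $\hm{S}e_4$, is exactly the paper's KKT solution $\widetilde E=\tfrac12\alpha\hm{S}\nabla C_0$, and both give $\delta^2/S_{44,0}$.

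The gap is in what this bound controls. It bounds only $\Omega_0(t)$, the average of $\omega_m$ over the sliding window. Theorem~\ref{main}, and hence $\dot{\hm{V}}<0$ in your exit-time argument, needs $\omega_m(\tau)\in[\omega_{\min},\omega_{\max}]$ at every point of the window. Since $\hm{V}$ is the window average of $\widetilde e^\top P(\theta)\widetilde e$, a small $\hm{V}$ gives no pointwise bound on $e_4$. So the claim that ``the geometric step keeps $\omega_m$ strictly inside the interval'' is unsupported, and the contradiction at $t^\star$ does not follow. Writing ``assume as the paper implicitly does'' at this point assumes the very step that produces $S_{44,0}$.

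The paper does not leave this step implicit; the first half of its proof addresses it. It argues that the boundary of $\omega_{\min}\hm{I}\preceq\hm{T}_\theta(\omega_m)\preceq\omega_{\max}\hm{I}$ is reached through a nonzero kernel vector of $\hm{T}_\theta(\omega_m)-\omega_{\min}\hm{I}$. This forces $\omega_m\equiv\omega_{\min}$ on the window, so $\Omega_k=0$ for $k\neq0$. The only active constraint is then the DC one, $\Omega_0=\omega_{\min}$, which is what you optimize. This reduction is itself heuristic, since multiplication by a continuous function generally has no eigenvector at its infimum. Still, it is the idea your proposal lacks.

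Your phase-domain fallback does not recover the stated result. With $v=\widetilde e^\top P(\theta)\widetilde e$ nonincreasing and $e_4(t)^2\le s_{44}(\theta(t))\,v(t)$, you get a rigorous but different condition, $v(t_0)<\delta^2/\sup_\theta s_{44}(\theta)$. It differs from the statement in three ways:
\begin{itemize}
\item the constant is $\sup_\theta s_{44}\ge S_{44,0}$, so the level is more conservative;
\item the condition is on the pointwise $v(t_0)$ rather than on $\hm{V}(\widetilde E(t_0))$;
\item $\delta$ must be measured from the rippled $\omega_m^{\mathrm{ref}}(t)$, not from $\omega_{m,0}^{\mathrm{ref}}$.
\end{itemize}
You must either adopt the paper's boundary reduction, with its limitations, or settle for this more conservative variant. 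As written, the formula for $L_{\max}$ with $S_{44,0}$ is not established.
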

\vspace{-.5cm}
\begin{pf}
Consider a reference $\omega_{m,0}^{\mathrm{ref}}\in[\omega_{\min}, \omega_{\max}]$  and the associated equilibrium $(X^{\mathrm{ref}}, U^{\mathrm{ref}})$. Assume that the corresponding rotor speed reference $\omega_m^{\mathrm{ref}}$ remains inside the admissible interval $[\omega_{\min}, \omega_{\max}]$. The objective is to determine the largest Lyapunov level set $L_{\max}$ for which the rotor speed component $x_4=\omega_m$ remains inside this interval.

Since $\omega(t)>0$ if and only if $\hm T_\theta(\omega)(t)\succ0$, admissibility of the rotor speed is equivalent to the harmonic constraints
\[
\omega_{\min}\hm I \preceq \hm T_\theta(\omega_m)(t) \preceq \omega_{\max}\hm I.
\]
Because the Lyapunov function is radially unbounded, the maximal admissible level set is necessarily attained on the boundary of these constraints. Considering, for instance, the lower bound, there exists $X\in\ell^2$, $X\neq0$, such that
\[
(\hm T_\theta(\omega_m)-\omega_{\min}\hm I)X=0.
\]
In the time domain, this condition implies $(\omega_m(t)-\omega_{\min})x(t)=0$ almost everywhere on $[t-T(t),\,t]$. Since $x$ is not identically zero and $\omega_m$ is continuous, it follows that $\omega_m(t)=\omega_{\min}$ on this interval. Therefore, the maximal level set is obtained by considering boundary trajectories satisfying $\omega_m(t)=\omega_{\min}$ or $\omega_m(t)=\omega_{\max}$.

Because the Lyapunov function defines convex ellipsoidal level sets centered at the equilibrium, the active constraint is necessarily the boundary closest to the equilibrium speed $\Omega^{\mathrm{ref}}$ (see Fig.~\ref{fig:ligneniveau}). 

\begin{figure}
    \centering
    \includegraphics[width=0.9\linewidth]{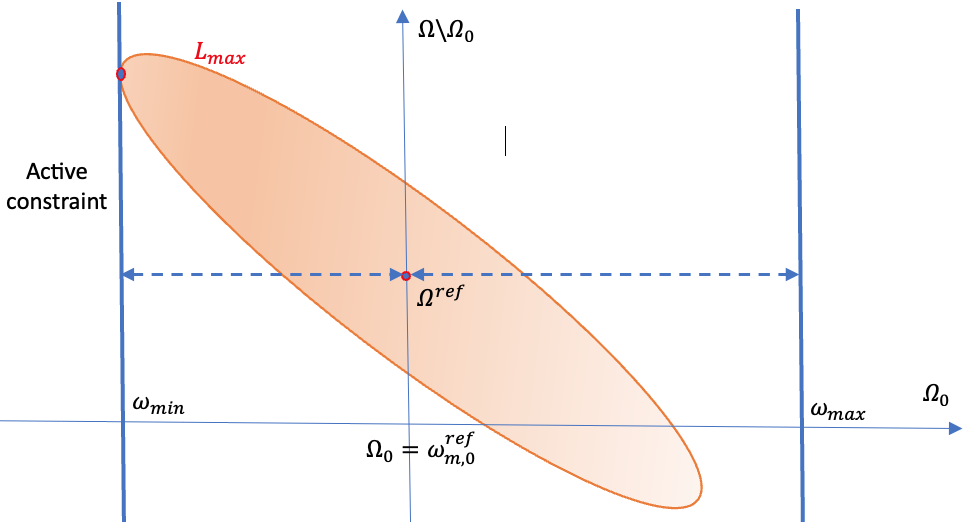}
    \caption{Projection of the maximum level set on the subspace $\Omega\setminus\Omega_0\times \Omega_0 $}\label{fig:ligneniveau}
\end{figure} 

Therefore, the admissible level is determined by
\[
\delta(\omega_{m,0}^{\mathrm{ref}})
=
\min\!\bigl(
|\omega_{\min}-\omega_{m,0}^{\mathrm{ref}}|,
|\omega_{\max}-\omega_{m,0}^{\mathrm{ref}}|
\bigr).
\]
Without loss of generality, assume that the minimum is attained at $\omega_{\min}$.

Let $\Omega=\hm F_\theta(\omega_m)$. Since $\omega_m$ is constant on the constraint boundary, we set $\Omega_{k}=0$ for all $k\neq0$ in the augmented state ${\widetilde E}$. The computation of the maximal level set then reduces to the quadratic optimization problem
\[
\max_{X,Z}\; \hm V({\widetilde E})
\]
subject to the constraint $ C_0(\Omega_{0})=\Omega_{0}-\omega_{\min}\ge0$.
Using the KKT optimality conditions, there exists a Lagrange multiplier $\alpha$ such that
\[
{\widetilde E}=\frac{1}{2}\alpha \hm S \nabla_{[X,Z]} C_0(\Omega_{0}),
\]
where $\hm S=\hm P^{-1}$ and $\nabla_{[X,Z]} C_0(\Omega_{m,0})$ denotes the gradient of the constraint with respect to $(X,Z)$. Considering the zero-order phasor associated with $\Omega$ (i.e., the central row of the fourth Toeplitz block), we obtain
\[
\Omega_{0}-\omega_{m,0}^{\mathrm{ref}}
=
\frac{1}{2}\alpha {\bigl(\hm S \nabla_{[X,Z]} C_0(\Omega_{0})\bigr)}_{4,0}.
\]

Since $\nabla_{[X,Z]} C_0(\Omega_{0})$ is zero everywhere except for the component associated with $\Omega_{0}$, which is equal to $1$, the Toeplitz structure of $\hm S$ yields
\[
{(\hm S \nabla_{[X,Z]} C_0(\Omega_{0}))}_{4}
=
{S}_{44}
=
\hm F_\theta(s_{44}).
\]
Therefore,
\[
\Omega_{0}-\omega_{m,0}^{\mathrm{ref}}
=
\frac{1}{2}\alpha\,{S}_{44,0}.
\]
Enforcing the boundary condition $\Omega_{0}=\omega_{\min}$ gives
\[
\alpha
=
2\,\frac{\omega_{\min}-\omega_{m,0}^{\mathrm{ref}}}{{S}_{44,0}}.
\]
Substituting the corresponding expression of ${\widetilde E}$ into the Lyapunov function yields~\eqref{eqLmax}, which concludes the proof.
\hfill $\square$
\end{pf}

\bibliographystyle{unsrt}
\bibliography{references} 
 
\end{document}